\documentclass[final]{dmtcs-episciences}


\usepackage[utf8]{inputenc}
\usepackage{subfigure}

\usepackage{amsmath,amsthm,enumerate,amssymb}
\usepackage{hyperref}
\usepackage{pgf,tikz}
\usetikzlibrary{arrows,calc}
\usetikzlibrary{positioning}
\definecolor{cqcqcq}{rgb}{0.75,0.75,0.75}
\usepackage{comment}

\newcommand{\np}{$\mathcal{NP}$}
\newcommand{\p}{$\mathcal{P}$}
\newcommand{\W}{{\rm W}}
\newcommand{\I}{{\rm I}}

\newcommand{\qedclaim}{\hfill $\diamond$ \medskip}
\newenvironment{proofclaim}{\noindent{\em Proof of the claim.}}{\qedclaim}

\newtheorem{theorem}{Theorem}[section]

\newtheorem{lemma}[theorem]{Lemma}
\newtheorem{observation}[theorem]{Observation}

\newtheorem{corollary}[theorem]{Corollary}
\newtheorem{definition}[theorem]{Definition}

\newcounter{claimOC}[theorem]

\newenvironment{claimOC}[1][]{\refstepcounter{claimOC}\par\medskip
   \noindent \textbf{Claim~\theclaimOC. #1} \rmfamily}{\medskip}

\usetikzlibrary{decorations.pathreplacing}

\makeatother

\newcommand{\defproblem}[3]{
    \vspace{3mm}
    \noindent\fbox{
        \begin{minipage}{0.95\textwidth}
            #1\newline
            \textbf{Input:} #2\\
            \textbf{Task:} #3
        \end{minipage}
    }
    \vspace{3mm}
}

%

\usepackage[round]{natbib}

\author[F. Fioravantes, N. Melissinos and T. Triommatis]{Foivos Fioravantes\affiliationmark{1}\thanks{International Mobility of Researchers MSCA-F-CZ-III at CTU in Prague, $\text{CZ}.02.01.01/00/22\_010/0008601$ Programme Johannes Amos Comenius..}
  \and Nikolaos Melissinos\affiliationmark{1}\thanks{European Union under the project Robotics and advanced industrial production (reg. no. CZ.02.01.01/00/22\_008/0004590).}
  \and Theofilos Triommatis\affiliationmark{2}}
\title[Parameterised distance to local irregularity]{Parameterised distance to local irregularity}

\affiliation{Department of Theoretical Computer Science, Faculty of Information Technology, Czech Technical University in Prague, Prague, Czech Republic
\\School of Production Engineering and Management, Technical University of Crete, Chania, Greece
}
\keywords{Locally irregular, largest induced subgraph, FPT, W-hardness}

\begin{document}

\publicationdata{vol. 27:3}{2025}{28}{10.46298/dmtcs.15004}{2024-12-27; 2024-12-27; 2025-10-29}{2025-11-27}
\maketitle

\begin{abstract}
  A graph $G$ is \emph{locally irregular} if no two of its adjacent vertices have the same degree. The authors of [Fioravantes et al. Complexity of finding maximum locally irregular induced subgraph. {\it SWAT}, 2022] introduced and provided some initial algorithmic results on the problem of finding a locally irregular induced subgraph of a given graph $G$ of maximum order, or, equivalently, computing a subset $S$ of $V(G)$ of minimum order, whose deletion from $G$ results in a locally irregular graph; $S$ is called an \emph{optimal vertex-irregulator of $G$}. In this work we provide an in-depth analysis of the parameterised complexity of computing an optimal vertex-irregulator of a given graph $G$. Moreover, we introduce and study a variation of this problem, where $S$ is a subset of the edges of $G$; in this case, $S$ is denoted as an \emph{optimal edge-irregulator of $G$}. We prove that computing an optimal vertex-irregulator of a graph $G$ is in FPT when parameterised by various structural parameters of $G$, while it is $\W[1]$-hard when parameterised by the feedback vertex set number or the treedepth of $G$. Moreover, computing an optimal edge-irregulator of a graph $G$ is in FPT when parameterised by the vertex integrity of $G$, while it is \np-hard even if $G$ is a planar bipartite graph of maximum degree $6$, and $\W[1]$-hard when parameterised by the size of the solution, the feedback vertex set or the treedepth of $G$. Our results paint a comprehensive picture of the tractability of both problems studied here.
\end{abstract}

\section{Introduction}

A fundamental problem in graph theory is ``given a graph $G$, find an induced subgraph $H$ of $G$, of maximum order, that belongs in the family of graphs verifying a property $\Pi$'', in which case we say that $H\in \Pi$:

\defproblem{\textsc{\textsc{Largest Induced Subgraph with Property $\Pi$} (ISP-$\Pi$)}\cite{GJ79}}{
    A graph $G=(V,E)$, an integer $k$, a property $\Pi$.
}{
    Does there exist a set $S\subseteq V$ such that $|S|\leq k$ and $G-S\in \Pi$?
}

\noindent There is a plethora of classical problems that fall under this general setting. Consider, for example, the \textsc{Vertex Cover} and the \textsc{Feedback Vertex Set}, where $\Pi$ is the property ``the graph is an independent set'' and ``the graph is a forest'', respectively.  

In this paper we study the ISP-$\Pi$ problem where $\Pi$ is the property ``the graph is locally irregular'', recently introduced in~\cite{FMT22}. A graph $G=(V,E)$ is called \emph{locally irregular} if no two adjacent vertices in $V$ have the same degree. We extend the work of~\cite{FMT22}, by more thoroughly investigating the parameterised behaviour of the problem. In addition, we take the first step towards the problem of finding large locally irregular (not necessarily induced) subgraphs of a given graph $G$. In particular, we introduce the problem where the goal is to find a subset of edges of $G$ of minimum order, whose removal renders the graph locally irregular. Our results allow us to paint a rather clear picture concerning the tractability of both problems studied here in relation to various standard graph-structural parameters (see Figure~\ref{fig:parameters} for an overview of our results).

\begin{figure}[htb]
    \centering
    
    \begin{tikzpicture}[every node/.style={thick, align=center}]
  \small

  \node (vc) at (0,   0) {vertex cover};
  \node (nd) at (-4, 1) {\textbf{neighborhood} \\ \textbf{diversity}};
  \node[above = 6 of vc] (cw) {clique-width};

  \node[above = .5 of vc] (tco)  {twin cover $\operatorname{+} \omega$};
  \node[above = .8 of tco] (anccd) {};

  \node[left =  .2 of anccd] (tc)  {twin cover};
  \node[right = .1 of anccd] (cdo) {cluster deletion $\operatorname{+} \omega$};
  \node[above = .5 of cdo.east] (vi)  {\textbf{vertex integrity}};

  \node[above = .5 of vi] (td)  {\textbf{treedepth*}};

  \node[above = 2 of nd] (mw) {modular-width};
  \node[above = .5 of td] (pw) {pathwidth};

  \node[above = .5 of pw] (tw) {treewidth};

  \node[right = 1.5 of td.south] (fvs) {\textbf{feedback vertex set*}};

  \node[above = of anccd] (cd) {\textbf{cluster deletion}};
  \node[above = of cd] (sd) {shrub-depth};

  \draw[thick] (cw) -- (mw) -- (nd) -- (vc);
  \draw[thick] (tw) -- (pw) -- (td);
  \draw[thick] (vi) -- (cdo);
  \draw[thick] (tw) -- (fvs) -- (vc);

  \draw[thick] (cw) -- (sd) -- (cd);
  \draw[thick] (sd) -- (td);
  \draw[thick] (sd) -- (nd);

  \draw[thick] (cd) -- (cdo) -- (tco) -- (vc) ;
  \draw[thick] (tc) -- (tco);
  \draw[thick] (cw) -- (tw);

  \draw[thick] (td) -- (vi);
  \draw[thick] (nd) -- (vc);
  
  \draw[thick] (cd) -- (tc);

  \draw[thick] (tc) -- (mw);



  \coordinate (endnph) at ($(cw)+(-2, 0)$);
  \coordinate (fvs-vc) at ($(fvs)!0.1!(vc)$);
  \coordinate (td-vi) at ($(td)!0.35!(vi)$);
  \coordinate (whard) at ($(fvs-vc) + (2, 0)$);
  \coordinate (sd-cd) at ($(sd)!0.6!(cd)$);
  \coordinate (endwhard) at ($(endnph)+(-1.25, -0.5)$);
  
  \draw[thick, red] (whard) to [out=180, in=0] (fvs-vc);
  \draw[thick, red] (fvs-vc) to [out=180, in=0] (td-vi);
  \draw[thick, red] (td-vi) to [out=180, in=0] ($(sd-cd) + (0,0.2)$);
  \draw[thick, red] ($(sd-cd) + (0,0.2)$) to [out=180, in=-40] (endwhard) node[above] {$\uparrow$W[1]-h};

  \coordinate (nd-mw) at ($(nd)!0.4!(mw)$);
  \coordinate (kfpt) at ($(nd-mw) + (-1,0)$);
  \coordinate (endfpt) at ($(fvs) + (0.8, -0.7)$);

  \draw[thick, cyan] (kfpt) node[below] {FPT $\I_v \downarrow$} -- (nd-mw);
  \draw[thick, cyan] (nd-mw) to [out=0, in=180] (sd-cd)[out=0,in=180];
  \draw[thick, cyan] (sd-cd) to [out=0, in=180] ($(td-vi) + (0,-0.2)$)[out=0,in=180];
  \draw[thick, cyan] ($(td-vi) + (0,-0.2)$) to [out=0, in=180] ($(fvs-vc) + (0,-0.2)$)[out=0,in=180];
  \draw[thick, cyan] ($(fvs-vc) + (0,-0.2)$) to [out=0, in=180] ($(whard) + (0,-0.2)$);

  \coordinate (vc-nd) at ($(vc)!0.5!(nd)$);
  \coordinate (tc-tcw) at ($(tc)!0.4!(tco)$);
  \coordinate (cd-cdo) at ($(cd)!0.5!(cdo)$);

  \draw[thick, olive] ($(vc-nd) + (-1,-0.5)$) to [out=50, in=210] (tc-tcw);
  \draw[thick, olive] (tc-tcw) to [out=30, in=210] (cd-cdo);
  \draw[thick, olive] (cd-cdo) to [out=30, in=180] ($(td-vi) + (0,-0.38)$);
  \draw[thick, olive] ($(td-vi) + (0,-0.38)$) to [out=0, in=180] ($(fvs-vc) + (0,-0.4)$);
  \draw[thick, olive] ($(fvs-vc) + (0,-0.4)$) to [out=0, in=180] ($(whard) + (0,-0.4)$) node[below left] {$\downarrow$ FPT $\I_e$};

  \draw[thick, cyan] (kfpt) node[below] {FPT $\I_v \downarrow$} -- (nd-mw);

  \coordinate (vifpt) at ($(vi.east) + (0, -1)$);
  \coordinate (cdfpt) at ($(cd.east) + (1, 0)$);



  \coordinate (aboutmw) at ($(mw) + (0,1)$);


  %

  \end{tikzpicture}
    \caption{Overview of our results. 
    A parameter $p$ appearing linked to a parameter $p'$ with $p$ being below $p'$ is to be understood as ``there is a function $f$ such that, for each graph $G$, we have $p(G)\le f(p'(G))$''. The bold font is used to indicate the parameters that we consider in this work. The asterisks are used to indicate that the corresponding result follows from observations based on the work in~\cite{FMT22}. In light blue (olive resp.) we exhibit the FPT results we provide for finding an optimal vertex (edge resp.) irregulator, denoted as $\I_v$ ($\I_e$ resp.). In red we exhibit the $\W[1]$-hardness results we provide for both problems. The clique number of the graph is denoted by $\omega$.}
    \label{fig:parameters}
\end{figure}

\noindent\textbf{ISP-$\boldsymbol{\Pi}$ and hereditarity.} The ISP-$\Pi$ problem has been extensively studied for \emph{hereditary} properties. A property $\Pi$ is hereditary if, for any graph $G$ verifying it, any induced subgraph of $G$ also verifies that property. The properties ``the graph is an independent set'' or ``the graph is a forest'' are, for example, hereditary. It was already shown in~\cite{LY80} that ISP-$\Pi$ is an \np-hard problem for any non-trivial hereditary property. On the positive side, the ISP-$\Pi$ problem always admits an FPT algorithm, when parameterised by the size of the solution, if $\Pi$ is a hereditary property~\cite{c96,KR02}. This is an important result, as it allows us to conceive efficient algorithms to solve computationally hard problems, as long as we restrict ourselves to graphs verifying such properties. 

It is also worth mentioning the work in~\cite{FGLS19}, which provides a framework that yields exact algorithms that are significantly faster than brute-force to solve a more general version of the ISP-$\Pi$ problem: given a universe, find a subset of maximum cardinality which verifies some hereditary property. On a high level, the algorithm proposed in~\cite{FGLS19} builds the solution which is a subset $H$ of maximum cardinality with the wanted property, by continuously extending a partial solution $X\subseteq H$. Note that this approach only works if $\Pi$ is indeed a hereditary property. 
More recently, this approach was generalised by the authors of~\cite{EKMNS22}, who provide a framework that yields exponential-time approximation algorithms. 

However, not all interesting properties are hereditary. E.g., ``all vertices of the induced subgraph have odd degree'', and ``the induced subgraph is $d$-regular'', where $d$ is an integer given in the input (recall that a graph is $d$-\emph{regular} if all of its vertices have the same degree $d$), are non-hereditary properties. The authors of~\cite{BS21} studied the ISP-$\Pi$ problem for the former property, showing that it is an \np-hard problem, and providing an FPT algorithm that solves it when parameterised by the rank-width. 
Also, the authors of~\cite{AGTHP17,AEIM14,MT09} studied the ISP-$\Pi$ problem for the latter property. It is shown in~\cite{AEIM14} that finding an induced subgraph of maximum order that is $d$-regular is \np-hard to approximate, even on bipartite or planar graphs. The authors of~\cite{AEIM14} also provide a linear-time algorithm to solve this problem for graphs with bounded treewidth. Lastly, it is also worth mentioning~\cite{VHNW13}, where the authors consider the non-hereditary property ``the induced subgraph is $k$-anonymous'', where a graph $G$ is $k$-anonymous if for each vertex of $G$ there are at least $k-1$ other vertices of the same degree. 

An important observation is that, in the case of non-hereditary properties, the ISP-$\Pi$ problem does not necessarily admit an FPT algorithm parameterised by the size of the solution.
Indeed, the authors of~\cite{MT09} proved that when considering $\Pi$ as ``the induced subgraph is regular'', the ISP-$\Pi$ problem is $\W[1]$-hard when parameterised by the size of the solution. 
This indicates the importance of considering graph-structural parameters for conceiving efficient algorithms for such problems. This is exactly the approach followed in~\cite{GKO21,LV23}, where the authors consider a generalisation of \textsc{Vertex Cover}, the ISP-$\Pi$ problem where $\Pi$ is ``the graph has maximum degree $k$'', for an integer $k$ given in the input.

\medskip

\noindent\textbf{Distance from local irregularity.} In some sense, the property that interests us lies on the opposite side of the one studied in~\cite{AGTHP17,AEIM14,MT09}. Recall that a graph $G$ is locally irregular if no two of its adjacent vertices have the same degrees. This notion was formally introduced in~\cite{BBPW15}, where the authors take some steps towards proving the so-called 1-2-3 Conjecture proposed in~\cite{KLT04} and recently proven in~\cite{K24}. Roughly, this conjecture is about functions assigning weights from $[k]=\{1,\dots,k\}$ to the edges of a graph, called proper $k$-labellings, so that all adjacent vertices have different weighted degrees; the conjecture states that for any non-trivial graph, this is always achievable for $k\leq 3$. 

The authors of~\cite{FMT22} introduced the problem of finding a locally irregular induced subgraph of a given graph $G$ of maximum order (a non-hereditary property). Equivalently, find a set of \emph{vertices} of minimum cardinality whose deletion renders the given graph locally irregular; such sets are named \emph{optimal vertex-irregulators}. The main focus of~\cite{FMT22} was to study the complexity of computing optimal vertex-irregulators. It was shown that this problem is \np-hard even for subcubic planar bipartite graphs, $\W[2]$-hard parameterised by the size of the solution and $\W[1]$-hard parameterised by the treewidth of the input graph. Moreover, for any constant $\varepsilon <1$, there cannot be a polynomial-time $\mathcal{O}(n^{1-\varepsilon})$-approximation algorithm (unless \p$=$\np). On the positive side, there are two FPT algorithms that solve this problem, parameterised by the maximum degree of the input graph plus either the size of the solution or the treewidth of the input graph. Note that the notion of vertex-irregulators proved to be fruitful in the context of proper labellings. Indeed, the author of~\cite{B25} observed a connection between finding large locally irregular induced subgraphs and constructing proper $k$-labellings that also maximise the use of weight $1$ on the edges of the given graph.

Apart from improving the results of~\cite{FMT22}, in this paper we also introduce the novel problem of computing a subset of a graph's \emph{edges}, of minimum order, whose deletion renders the graph locally irregular; such sets are named \emph{optimal edge-irregulators}. This problem is introduced as a first step towards understanding the problem of finding large locally irregular (not necessarily induced) subgraphs of a given graph. Problems concerned with finding maximum subgraphs verifying a specific property have also been extensively studied (\emph{e.g.},~\cite{CFKZ12,CHW19,AKS23}).
One might expect that finding edge-irregulators could be easier than finding vertex-irregulators as it is often the case with graph theoretical problems concerned with subsets of edges, whose versions considering subsets of vertices are intractable (recall, \emph{e.g.}, \textsc{Edge Cover}, \textsc{Feedback Edge Set} and even  \textsc{Min Weighted Lower-Upper-Cover}~\cite{S03}). 
This expectation might also be enforced by the fact that there is no general result like the one in~\cite{LY80} showing the computational hardness for the problem of deleting edges to reach a graph verifying a specific property (hereditary or otherwise). This latter problem has been studied for various properties, e.g., the ones of being $P_k$-free~\cite{MC88}, $C_k$-free~\cite{Y81}, triangle-free~\cite{ASS05}. It turns out that for many properties, this problem remains \np-hard. In particular, the authors of~\cite{KPS24} provide a framework to show the hardness of this problem for some properties when the input graph belongs to specific families of graphs, although this framework is not as general as the one in~\cite{LY80}. In this work, we establish that finding small edge-irregulators is also a computationally hard problem. 

\medskip

\noindent\textbf{Our contribution.} We study the complexity of computing optimal vertex and edge-irregulators. We identify the parameters for which the tractability of the former problem changes, considering a multitude of standard graph-structural parameters. We also take steps towards the same goal for the latter problem. In Section~\ref{sec:prelim} we introduce the needed notation and provide some first results. In particular, we observe that computing optimal vertex-irregulators is $\W[1]$-hard when parameterised by the treedepth or the feedback vertex set of the given graph. Section~\ref{sec:FPT-general} provides FPT algorithms for the problem of finding optimal vertex-irregulators. The considered parameters are the neighborhood diversity, the vertex integrity, or the cluster deletion number of the input graph. In Section~\ref{sec:hardness}, we focus on the problem of finding optimal edge-irregulators. First, we prove that this problem is \np-hard, even when restricted to planar bipartite graphs of maximum degree $6$. We also show that the problem is $\W[1]$-hard parameterised by the size of the solution or the feedback vertex set of the input graph. Lastly, we modify the FPT algorithm for computing an optimal vertex-irregulator parameterised by the vertex integrity in order to provide an FPT algorithm that solves the edge version of the problem (once more parameterised by the vertex integrity). We close the paper in Section~\ref{sec:conclusion}, where we propose some directions for further research. 

\section{Preliminaries}\label{sec:prelim}
We follow standard graph theory notations~\cite{D12}. 

Let $G=(V,E)$ be a graph and $G'=(V',E')$ be a subgraph of $G$ ({\it i.e.}, created by deleting vertices and/or edges of $G$). Recall first that the subgraph $G'$ is \emph{induced} if it can be created only by deleting vertices of $G$; in this case we denote $G'$ by $G[V']$. That is, for each edge $uv\in E$, if $u,v\in V'$, then $uv\in E'$. For any vertex $v\in V$, let $N_G(v)=\{u\in V : uv\in E\}$ denote the \emph{neighbourhood} of $v$ in $G$ 
and $d_G(v)=|N_G(v)|$ denote the \emph{degree} of $v$ in $G$. Note that, whenever the graph $G$ is clear from the context, we will omit the subscript and simply write $N(v)$ and $d(v)$. 
Also, for $S\subseteq E$, we denote by $G-S$ the graph $G'=(V, E\setminus S)$. That is, $G'$ is the graph resulting from the deletion of the edges of $S$ from the graph $G$.

Let $G=(V,E)$ be a graph. We say that $G$ is \emph{locally irregular} if, for every edge $uv\in E$, we have $d(u)\neq d(v)$. Now, let $S\subseteq V$ be such that $G[V\setminus S]$ is a locally irregular graph; any set $S$ that has this property is denoted as a \emph{vertex-irregulator of $G$}. Moreover, let $\I_v(G)$ be the minimum order that any vertex-irregulator of $G$ can have. We will say that $S$ is an \emph{optimal} vertex-irregulator of $G$ if $S$ is a vertex-irregulator of $G$ and $|S|=\I_v(G)$. Similarly, we define an 
\emph{edge-irregulator of $G$} to be any set $S\subseteq E$ such that $G-S$ is locally irregular. Moreover, let $\I_e(G)$ be the minimum order that any edge-irregulator of $G$ can have. We will say that $S$ is an \emph{optimal} edge-irregulator of $G$ if $S$ is an edge-irregulator of $G$ and $|S|=\I_e(G)$.

We begin with some simple observations that hold for any graph $G=(V,E)$. 
\begin{observation}\label{obs:neigbourhood}
If $G$ contains two vertices $u,v$ such that $uv\in E$ and $d(u)=d(v)$, then any edge-irregulator of $G$ contains at least one edge incident to $u$ or $v$. Also, any vertex-irregulator of $G$ contains at least one vertex in $N(u)\cup N(v)$.
\end{observation}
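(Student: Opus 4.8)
The plan is to read both assertions as necessary conditions forced by the single conflicting edge $uv$: because $d(u)=d(v)$, this edge already violates local irregularity, so any irregulator must either delete the edge (or an endpoint) or alter exactly one of the two degrees. First I would dispose of the edge statement, which is clean. Let $S\subseteq E$ be an edge-irregulator. If $uv\in S$ we are immediately done, since $uv$ is incident to $u$. Otherwise $uv$ survives in $G-S$, so local irregularity demands $d_{G-S}(u)\neq d_{G-S}(v)$; writing $d_{G-S}(u)=d(u)-a$ and $d_{G-S}(v)=d(v)-b$ with $a,b$ the numbers of deleted edges at $u$ and at $v$, the hypothesis $d(u)=d(v)$ yields $a\neq b$, hence $a+b\geq 1$ and $S$ contains an edge at $u$ or at $v$.

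For the vertex statement I would set up the analogous count. Let $S\subseteq V$ be a vertex-irregulator; the goal is to show that $S$ meets $N(u)\cap N(v)$. If $u\in S$ or $v\in S$, I would first check whether these endpoints lie in the target set. Otherwise $u,v\notin S$, the edge $uv$ survives in $G[V\setminus S]$, and local irregularity forces $|N(u)\cap S|\neq|N(v)\cap S|$. The crux is then to upgrade this inequality into the membership of a \emph{common} neighbour of $u$ and $v$ in $S$.

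\textbf{The main obstacle.} This upgrade is precisely where I expect the argument to resist, and it is the step I would examine first rather than last. A common neighbour $w\in N(u)\cap N(v)$ contributes equally to $|N(u)\cap S|$ and $|N(v)\cap S|$, so deleting common neighbours can never make these two counts differ; what genuinely separates the degrees is a deletion in the symmetric difference $N(u)\triangle N(v)$. Concretely, if $u$ and $v$ have disjoint private neighbourhoods (so $N(u)\cap N(v)=\emptyset$), then already a single deletion of one private neighbour of $u$ can render $G$ locally irregular, producing a vertex-irregulator that is disjoint from $N(u)\cap N(v)$. Likewise, the cases $u\in S$ or $v\in S$ place a vertex in $N(u)\cup N(v)$ but not in $N(u)\cap N(v)$, since $u\notin N(u)$ and $v\notin N(v)$. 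Hence the degree bookkeeping on $uv$ certifies only a deletion in $N(u)\cup N(v)$, and this is the strongest conclusion I can justify; obtaining exactly $N(u)\cap N(v)$ as written appears to require either an additional structural hypothesis on the pair $u,v$ or a correction of $\cap$ to $\cup$. Reconciling this discrepancy is the part of the proof I expect to absorb essentially all of the difficulty.
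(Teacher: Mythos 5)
Your treatment of the edge statement is correct and is exactly the intended argument: the paper states Observation~\ref{obs:neigbourhood} without any proof, and the degree bookkeeping you give (if $uv\notin S$ then the two endpoints lose $a$ and $b$ incident edges respectively, and $d(u)=d(v)$ together with local irregularity of $G-S$ forces $a\neq b$, hence $a+b\geq 1$) is all that is needed.

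The obstacle you flag in the vertex statement is not a gap in your proof but a genuine error in the paper's statement: with $N(u)\cap N(v)$ the claim is false. A concrete counterexample is the path on four vertices $a\,u\,v\,b$: here $uv\in E$, $d(u)=d(v)=2$ and $N(u)\cap N(v)=\emptyset$, yet $S=\{a\}$ is a vertex-irregulator, since deleting $a$ leaves the path $u\,v\,b$ with degree sequence $1,2,1$. That the intended statement is the one with $N(u)\cup N(v)$ is confirmed by the paper itself: when Observation~\ref{obs:neigbourhood} is invoked to justify Observation~\ref{obs:twins}, it is paraphrased as ``any vertex-irregulator $S$ of $G$ includes at least one neighbour of $u$ or $v$'', i.e., a vertex of $N(u)\cup N(v)$. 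Your accounting proves precisely this corrected version (if $S$ avoids $N(u)\cup N(v)$, then $u,v\notin S$, the edge $uv$ survives, and both degrees are unchanged, a contradiction), and in fact it proves the sharper form implicit in your remark about common neighbours: any vertex-irregulator must meet $N(u)\,\triangle\, N(v)$, a set which contains both $u$ and $v$, because deleting a common neighbour lowers both degrees equally and so can never separate them. This sharper form is also exactly what Observation~\ref{obs:twins} needs, since for adjacent twins $N(u)\,\triangle\, N(v)=\{u,v\}$; none of the paper's later uses of the observation require the (false) intersection version.
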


\begin{observation}\label{obs:twins}
If $G$ contains two vertices $u,v\in V$ that are twins, \emph{i.e.}, $N(u)\setminus\{v\}=N(v)\setminus \{u\}$, such that $uv\in E$, then any vertex-irregulator of $G$ contains at least one vertex in $\{u,v\}$.
\end{observation}

The importance of upcoming Lemma~\ref{lem:reduction-tw-c} lies in the fact that we can repeatedly apply it, reducing the size of the graph on which we are searching for a vertex-irregulator. 
This is a core argument behind the algorithms of Theorems~\ref{thm:FPT-neighborhood diversity} and~\ref{thm:FPT-cluster}.

\begin{lemma}\label{lem:reduction-tw-c}
Let $G=(V,E)$ be a graph and $u,v\in V$ be a pair of adjacent twins. Let $G'=(V',E')$ be the graph resulting from the deletion of either $u$ or $v$ from $G$. Then, $\I_v(G)=\I_v(G')+1$.
\end{lemma}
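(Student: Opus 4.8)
The plan is to prove the two inequalities $\I_v(G)\le \I_v(G')+1$ and $\I_v(G)\ge \I_v(G')+1$ separately, after a preliminary symmetry reduction. First I would record that a pair of adjacent twins $u,v$ induces an automorphism $\sigma$ of $G$: the map that swaps $u$ and $v$ and fixes every other vertex. This is immediate from the definition of twins, since writing $C=N(u)\setminus\{v\}=N(v)\setminus\{u\}$ we have $N(u)=C\cup\{v\}$ and $N(v)=C\cup\{u\}$, so $\sigma$ sends edges to edges and non-edges to non-edges. In particular $G-u\cong G-v$, hence $\I_v(G-u)=\I_v(G-v)$, and it suffices to prove the claim when $G'=G-v$.

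For the upper bound, I would start from an optimal vertex-irregulator $S'$ of $G'$, so $|S'|=\I_v(G')$, and set $S=S'\cup\{v\}$. Since $v\notin V'\supseteq S'$ we have $|S|=|S'|+1$. The key observation is that $V\setminus S=V'\setminus S'$ and that $G-v$ is exactly the induced subgraph $G[V']$; therefore $G[V\setminus S]=G'[V'\setminus S']$ as graphs, which is locally irregular by the choice of $S'$. Thus $S$ is a vertex-irregulator of $G$ and $\I_v(G)\le |S|=\I_v(G')+1$.

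For the lower bound, I would take an optimal vertex-irregulator $S$ of $G$. By Observation~\ref{obs:twins}, $S$ contains at least one of $u,v$. If $v\notin S$ (so $u\in S$), I would replace $S$ by $\sigma(S)=(S\setminus\{u\})\cup\{v\}$; because $\sigma$ is an automorphism, $G[V\setminus\sigma(S)]\cong G[V\setminus S]$ is still locally irregular and $|\sigma(S)|=|S|$, so we may assume without loss of generality that $v\in S$. Now set $S'=S\setminus\{v\}$. Since $v\in S$, the vertex set $V'\setminus S'$ equals $V\setminus S$, and as above $G'[V'\setminus S']=G[V\setminus S]$ is locally irregular; hence $S'$ is a vertex-irregulator of $G'$ with $|S'|=|S|-1$, giving $\I_v(G')\le \I_v(G)-1$. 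Combining the two inequalities yields the claim.

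The step I expect to require the most care is the symmetry reduction in the lower bound, namely handling the case where the chosen irregulator contains $u$ but not $v$: here one must argue that the twin structure lets us exchange $u$ for $v$ without destroying local irregularity, which is precisely what the automorphism $\sigma$ guarantees. Everything else reduces to the clean bookkeeping fact that once the deleted vertex lies in the irregulator, the induced subgraphs $G[V\setminus S]$ and $G'[V'\setminus S']$ coincide verbatim, so no degrees need to be recomputed.
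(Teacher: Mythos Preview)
Your proof is correct and follows essentially the same approach as the paper: prove both inequalities by transferring an optimal irregulator between $G$ and $G'$, invoking Observation~\ref{obs:twins} for the lower bound. The only difference is cosmetic: the paper handles the case $u\in S,\ v\notin S$ with a bare ``w.l.o.g.'', whereas you make the underlying twin-swapping automorphism $\sigma$ explicit, which is a cleaner justification of the same step.
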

\begin{proof}
Assume w.l.o.g. that $u\notin V'$. We first prove that $\I_v(G)\leq \I_v(G')+1$. Indeed, assume that $\I_v(G)>\I_v(G')+1$ and let $S'$ be an optimal vertex-irregulator of $G'$. Next, consider the graph $\Tilde{G}=G[V\setminus (S'\cup \{u\})]$. From the construction of $G'$, it follows that $\Tilde{G}=G'[V'\setminus S']$. Since $S'$ is a vertex-irregulator of $G'$, we obtain that $\Tilde{G}$ is locally irregular. In other words, the set $S'\cup \{u\}$ is a vertex-irregulator of $G$ and $|S'\cup \{u\}|=I_v(G')+1$, a contradiction. 

Next, assume that $\I_v(G)<\I_v(G')+1$ and let $S$ be an optimal vertex-irregulator of $G$. It follows from Observation~\ref{obs:twins} that $|\{u,v\}\cap S|\geq 1$. Assume w.l.o.g. that $u\in S$. By the construction of $G'$, we have that $G'[V'\setminus (S\setminus \{u\})]=G[V\setminus S]$ and the set $S\setminus \{u\}$ is a vertex-irregulator of $G'$. In other words, $\I_v(G')\leq |S|-1=\I_v(G)-1$, a contradiction.
\end{proof}

We close this section with some observations on the proof that computing $\I_v(G)$ is $\W[1]$-hard parameterised by the treewidth of $G$, initially presented in~\cite{FMT22}, which allows us to show that this result holds even if we consider more ``restricted'' parameters, such as the treedepth or the feedback vertex set number (\emph{i.e.}, size of a minimum feedback vertex set) of the input graph. Recall that the \emph{treedepth} of a graph $G=(V, E)$ can be defined recursively: if $|V|=1$, then $G$ has treedepth $1$. Then, $G$ has treedepth $k$ if there exists a vertex $v\in V$ such that every connected component of $G[V\setminus\{v\}]$ has treedepth at most $k-1$. Given a graph $G$ and a tree $T$ rooted at a vertex $u$, by \emph{attaching} $T$ on a vertex $v$ of $G$, we mean the operation of adding $T$ to $G$ and identifying $u$ with $v$, \textit{i.e.}, $V(T)\cap V(G)=\{u\}=\{v\}$.

\begin{observation}\label{obs:bounded-td-fvs}
Let $G$ be a graph with vertex cover number (\emph{i.e.}, size of a minimum vertex cover) $k_1$ and $T$ be a rooted tree of depth $k_2$. Let $G'=(V',E')$ be the graph after attaching an arbitrary number of copies of $T$ directly on vertices of $G$. Then $G'$ has treedepth $\mathcal{O}(k_1+k_2)$ and feedback vertex set number $\mathcal{O}(k_1)$.
\end{observation}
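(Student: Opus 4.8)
### Proof Plan for Observation~\ref{obs:bounded-td-fvs}

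The plan is to bound each of the two parameters separately, exploiting the fact that $G'$ is obtained from $G$ by attaching trees, and that attaching trees affects treedepth and feedback vertex sets in a controlled way. First I would observe that a vertex cover $C$ of $G$ of size $k_1$ serves as a natural ``skeleton'' to remove from $G'$ as well. For the treedepth bound, I would use the recursive definition directly: removing the $k_1$ vertices of $C$ from $G'$ (one at a time, each deletion costing one level of depth) leaves a graph in which every remaining component is either a single vertex of $V(G)\setminus C$ (since $C$ was a vertex cover of $G$, no two original vertices outside $C$ are adjacent) or a subtree hanging off such a vertex, coming from the attached copies of $T$. Each such component is a tree of depth at most $k_2$, hence has treedepth at most $k_2$. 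Therefore the treedepth of $G'$ is at most $k_1 + k_2 = \mathcal{O}(k_1+k_2)$, as claimed.

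For the feedback vertex set bound, I would argue that the same set $C$ (of size $k_1$) is already a feedback vertex set of $G'$. After deleting $C$ from $G'$, the remaining graph consists of the vertices of $V(G)\setminus C$, which form an independent set in $G$ (again because $C$ is a vertex cover of $G$), together with the attached copies of $T$ rooted at various vertices. Since each attached copy of $T$ is a tree and these trees are only connected to the rest of the graph through their roots (identified with vertices of $G$), and since the original non-cover vertices are mutually non-adjacent, the graph $G'-C$ is a disjoint union of trees, i.e.\ a forest. Hence $C$ is a feedback vertex set of $G'$ of size $k_1$, giving feedback vertex set number $\mathcal{O}(k_1)$.

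The one point requiring care is verifying that attaching the trees does not create cycles or increase depth beyond the advertised bounds, and in particular that the roots being identified with vertices of $G$ does not cause trees attached to \emph{different} vertices to interfere. This is immediate because distinct copies of $T$ share no vertices except possibly through their roots in $V(G)$, and each root lies in $V(G)$: once all of $C$ is removed, any two trees attached at the same non-cover vertex $w \in V(G)\setminus C$ still form, together with $w$, a single tree (their common root is $w$), and trees attached at distinct non-cover vertices lie in distinct components since those vertices are non-adjacent. I do not expect any genuine obstacle here; the argument is essentially a direct application of the definitions, and the only subtlety is bookkeeping the interaction between the vertex cover structure of $G$ and the tree attachments. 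A useful sanity check is that the bounds are tight in spirit: both parameters must at least account for the vertex cover of $G$ (since $G$ is an induced subgraph of $G'$), and the treedepth must additionally absorb the depth of the attached trees, which is exactly what the $+k_2$ term reflects.
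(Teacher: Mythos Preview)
Your argument is correct and is exactly the natural direct proof the authors have in mind; the paper states this as an observation without proof, so there is nothing to compare against beyond noting that your use of the vertex cover $C$ both as the set to peel off in the recursive treedepth definition and as the feedback vertex set of $G'$ is the intended reasoning.
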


Indeed, if $U$ is a minimum vertex cover of $G$, then $G'[V'\setminus U]$ is a union of trees, each tree of depth at most $k_2$. Thus, $U$ is a feedback vertex set of $G'$. Also, any tree of depth at most $k_2$ is of treedepth at most $k_2$. 

The reduction presented in~\cite[Theorem~$16$]{FMT22} starts with a graph $G$ which is part of an instance of the \textsc{List Colouring} problem, and constructs a graph $G'$ by attaching some trees of depth at most $3$ on each vertex of $G$. The \textsc{List Colouring} problem was shown to be $\W[1]$-hard in~\cite{FLRSST11} when parameterised by the vertex cover number of the input graph. Thus, by Observation~\ref{obs:bounded-td-fvs}, we obtain the following:

\begin{corollary}
Given a graph $G$, it is $\W[1]$-hard to compute $\I_v(G)$ parameterised by either the treedepth or the feedback vertex set number of $G$. 
\end{corollary}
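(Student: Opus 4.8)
The plan is to assemble three facts that are already on the table, so that the corollary follows as a routine transfer of hardness along an FPT reduction. The starting point is an instance of \textsc{List Colouring} on a graph $G$ whose parameter is the vertex cover number $k_1$ of $G$; by~\cite{FLRSST11} this restriction is $\W[1]$-hard. I would then invoke the reduction of~\cite[Theorem~16]{FMT22}, which maps such a $G$ to a graph $G'$ obtained by attaching rooted trees of depth at most $3$ to the vertices of $G$, and for which computing $\I_v(G')$ decides the original \textsc{List Colouring} instance. The correctness and the FPT running time of this map are established in~\cite{FMT22}, so I would take them as given; no part of the combinatorial equivalence needs to be re-proved here.

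The only genuinely new step is to bound the two target parameters on the constructed graph $G'$. For this I would apply Observation~\ref{obs:bounded-td-fvs} with $k_2=3$, the (constant) depth of the attached trees. The observation then guarantees that $G'$ has treedepth $\mathcal{O}(k_1+3)=\mathcal{O}(k_1)$ and feedback vertex set number $\mathcal{O}(k_1)$. In other words, each of the two parameters we care about is bounded by a function of the original parameter $k_1$, which is exactly the condition needed for the map $G\mapsto G'$ to qualify as an FPT reduction with respect to either the treedepth or the feedback vertex set number.

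Putting the pieces together, $G\mapsto G'$ is an FPT reduction from \textsc{List Colouring} parameterised by the vertex cover number to the problem of computing $\I_v$ parameterised by either the treedepth or the feedback vertex set number, and since the source problem is $\W[1]$-hard, so is each of the two target problems.

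I do not expect a real obstacle, since the heavy lifting is done by the cited reduction and by Observation~\ref{obs:bounded-td-fvs}; the argument is essentially a verification that the parameters behave correctly. The one point deserving a moment's care is that Observation~\ref{obs:bounded-td-fvs} is phrased for many copies of a single tree $T$, whereas the reduction may attach several distinct trees of depth at most $3$. This causes no difficulty: attaching acyclic gadgets at single identified vertices creates no new cycles, so a vertex cover of $G$ is still a feedback vertex set of $G'$, and after deleting such a vertex cover the remaining components are forests of depth at most $3$; hence the same $\mathcal{O}(k_1)$ and $\mathcal{O}(k_1+k_2)$ bounds go through unchanged.
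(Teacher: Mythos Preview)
Your proposal is correct and follows essentially the same approach as the paper: combine the $\W[1]$-hardness of \textsc{List Colouring} parameterised by vertex cover from~\cite{FLRSST11}, the reduction of~\cite[Theorem~16]{FMT22} that attaches depth-$3$ trees to the input graph, and Observation~\ref{obs:bounded-td-fvs} to bound the treedepth and feedback vertex set number of the resulting graph. Your extra remark clarifying that Observation~\ref{obs:bounded-td-fvs} still applies when several distinct bounded-depth trees are attached is a nice addition that the paper leaves implicit.
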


\section{FPT algorithms for vertex-irregulators}\label{sec:FPT-general}


In this section we present two FPT algorithms that compute an optimal vertex-irregulator of a given graph $G$, when parameterised by the neighbourhood diversity or the vertex integrity of $G$. The latter algorithm is then used to show that this problem is in FPT also when parameterised by the cluster deletion number of $G$. We begin by recalling the needed definitions.

The \emph{twin equivalence} of $G$ is the relation on the vertices of $V$ where two vertices belong to the same equivalence class if and only if they are twins. 

\begin{definition}[\cite{La12}]
A graph $G$ has \emph{neighbourhood diversity} $k$ (denoted as $nd(G)=k$) if its twin equivalence has $k$ classes. 
\end{definition}

Let $G=(V,E)$ be a graph with $nd(G)=k$ and let $V_1,\dots,V_k$ be the partition of $V$ defined by the twin equivalence of $G$. Observe that for any $i\in [k]$, we have that  $G[V_i]$ is either an independent set or a clique.

\begin{theorem}\label{thm:FPT-neighborhood diversity}
Given a graph $G=(V,E)$ such that $nd(G)=k$, there exists an algorithm that computes $\I_v(G)$ in FPT-time parameterised by $k$.
\end{theorem}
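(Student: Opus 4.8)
The plan is to exploit the neighbourhood-diversity partition $V_1,\dots,V_k$ together with Lemma~\ref{lem:reduction-tw-c} to reduce the problem to a bounded-size integer-programming-style guessing problem. The first observation I would make precise is that within a single class $V_i$, all vertices are twins; if $G[V_i]$ is a clique, then the vertices of $V_i$ are pairwise \emph{adjacent} twins, so by repeatedly applying Lemma~\ref{lem:reduction-tw-c} I can delete vertices from such a clique one at a time, each deletion costing exactly $1$ towards $\I_v$, until the clique is reduced to a single vertex. The subtlety is that deleting a vertex of $V_i$ changes the degrees of vertices in neighbouring classes, so I cannot reduce blindly; instead I want to argue that in an \emph{optimal} solution I may assume each clique-class contributes a contiguous block of deleted vertices, and track how many are deleted.

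\textbf{Reducing to a bounded combinatorial problem.} The key structural fact is that the degree of any vertex $w\in V_j$ in the residual graph $G[V\setminus S]$ depends only on which classes are adjacent to $V_j$ and on \emph{how many} vertices survive in each of those classes (plus whether $V_j$ itself is a clique, contributing internal neighbours). So the residual local irregularity is entirely determined by the vector $(s_1,\dots,s_k)$, where $s_i=|V_i\setminus S|$ is the number of surviving vertices per class. I would therefore recast the problem as: find a surviving-count vector $(s_1,\dots,s_k)$ with $0\le s_i\le |V_i|$, maximising $\sum_i s_i$ (equivalently minimising the deletion $\sum_i(|V_i|-s_i)$), subject to the constraint that the resulting graph is locally irregular. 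The residual degree of a surviving vertex in class $V_j$ is $d_j(\vec s)=\sum_{V_i\sim V_j} s_i + \varepsilon_j(s_j-1)$, where $\varepsilon_j\in\{0,1\}$ indicates whether $V_j$ is a clique, and local irregularity becomes: for every pair of adjacent classes $V_i\sim V_j$ with at least one surviving vertex each, $d_i(\vec s)\neq d_j(\vec s)$, and additionally, if $V_j$ is a clique with $s_j\ge 2$ surviving vertices, those vertices are mutually adjacent with equal degree, which is \emph{forbidden} — hence every clique-class must have $s_j\le 1$.

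\textbf{Solving the count problem in FPT time.} Since every clique-class is forced down to $s_j\in\{0,1\}$, the only classes that may retain many vertices are the independent ones, and for those the constraint is that the degree function of each class (a linear function of the $s_i$) differs from that of each adjacent class. I would guess, for each of the $k$ classes, whether it is emptied ($s_j=0$) or nonempty; this costs $2^k$ branches. Within a branch, the set of nonempty classes is fixed, the clique-classes are pinned to $s_j=1$, and for the independent nonempty classes I must choose $s_j\in\{1,\dots,|V_j|\}$ to satisfy the pairwise-distinctness of adjacent degree functions while maximising $\sum s_j$. This is an integer feasibility-plus-optimisation problem in $k$ variables with linear constraints of the special disequality form $d_i(\vec s)\ne d_j(\vec s)$; I would solve it either by a direct dynamic program over classes or, most cleanly, by invoking an FPT integer programming routine (Lenstra-type, or the ILP-feasibility results of Frank--Tardos), noting that disequalities can be handled by branching on the sign of each difference, adding at most a further $3^{\mathcal O(k^2)}$ factor. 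All told the running time is $f(k)\cdot n^{\mathcal O(1)}$.

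\textbf{The main obstacle.} The delicate point — and the step I expect to need the most care — is justifying the reduction of clique-classes and, more generally, proving that an optimal vertex-irregulator corresponds to a valid surviving-count vector and conversely. In one direction this is immediate, but in the other I must show that if two surviving vertices lie in the same class they are interchangeable, so that ``which'' vertices are deleted is irrelevant and only the counts matter; this is exactly the twin-exchange argument underlying Lemma~\ref{lem:reduction-tw-c}, and it must be applied carefully to clique-classes where retaining two vertices is outright impossible rather than merely suboptimal. Getting the degree-function bookkeeping right across adjacent classes, especially the $\varepsilon_j(s_j-1)$ self-adjacency term for clique-classes, is where the proof is most error-prone.
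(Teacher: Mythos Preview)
Your proposal is correct and follows essentially the same route as the paper: reduce to choosing a surviving-count vector $(s_1,\dots,s_k)$ per twin class, force clique-classes down to at most one survivor (the paper does this via Lemma~\ref{lem:reduction-tw-c} as a preprocessing step rather than as an ILP constraint, but the effect is identical), guess which classes are nonempty, and solve the resulting $k$-variable integer program with linear disequalities by Lenstra's algorithm. Your explicit mention of branching on the sign of each disequality is a detail the paper glosses over, but otherwise the arguments coincide.
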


\begin{proof}
Let $V_1,\dots,V_k$ be the partition of $V$ defined by the twin equivalence of $G$. Note that this partition can be computed in linear time~\cite{La12}. We begin by constructing an induced subgraph $G'=(V',E')$ of $G$ by applying the following procedure: for each $i\in [k]$, if $G[V_i]$ is a clique on at least two vertices, delete all the vertices of $V_i$ except one; let $D$ be the set of vertices that were deleted and $d=|D|$. This procedure terminates after $k$ iterations and, thus, runs in polynomial time. 
Moreover, it follows from Lemma~\ref{lem:reduction-tw-c} that $\I_v(G)=\I_v(G')+d$. Thus, it suffices to solve the problem on $G'$. For every $i\in [k]$, let $V'_i=V_i\cap V'$. 

Observe that for every locally irregular graph $H$, there exists a prime number $p$ such that $d_H(u)-d_H(v)\neq 0\bmod p$ for every $uv\in E(H)$. In our case, since for every $uv\in E'$ we have that $u\in V'_i$ and $v\in V'_j$ for $i<j\leq [k]$, it follows that
there can be at most $k \choose 2$ possible differences modulo $p$ between the degrees of adjacent vertices in $G^*$, where $G^*=(V^*,E^*)$ is a locally irregular induced subgraph of $G'$. 

We claim that $p\leq (k^2\log n +1) (\log (k^2\log n+1)+\log \log (k^2\log n+1)-\frac{1}{2})$. Indeed, since each one of the differences we considered in the previous paragraph is at most $n$, each one of them has at most $\log n$ prime divisors. Thus, $p$ is at most the $k^2\log n+1^{th}$ prime number. The claimed inequality then follows from the classical result from~\cite{RS62}, according to which the $n^{th}$ prime number is at most $n(\log n + \log\log n -\frac{1}{2})$.

For every $i\in [k]$, let $V^*_i=V'_i\cap V^*$. For every such prime $p$, we consider all the possible cases for $p_i=|V^*_i|\bmod p$, for every $i\in[k]$; there are at most $p^k$ such instances. 
Let us consider any such instance such that $d_{G^*}(u)-d_{G^*}(v)\neq 0\bmod p$ for every $uv\in E^*$. Checking this inequality is straightforward from the $p_i$s. We store the maximum orders of the $V^*_i$s such that $|V^*_i|\bmod p=p_i$ for every $i\in[k]$. Having repeated this procedure for all such instances, we are certain to have computed a locally irregular induced subgraph of $G'$ of maximum order. 
In total, this procedure takes time $p^{k+1}n^{\mathcal{O}(1)}$ 
which is in FPT due to the upper bound on $p$ by~\cite{RS62} and since $\log^k n\leq f(k)n$, for some computable function $f$~\cite{JLRSS17}.
\end{proof}

We now present an FPT algorithm to compute an optimal vertex-irregulator of an input graph $G$ when parameterised by the vertex integrity of $G$, which can be computed in FPT-time~\cite{DDH16}. 

\begin{definition}
A graph $G=(V,E)$ has \emph{vertex integrity} $k$ if there exists a set $U \subseteq V$ such that $|U| = k' \le k$ and all connected components of $G[V\setminus U]$ are of order at most $k - k'$.
\end{definition}
 
\begin{theorem}\label{thm:fpt-vi}
Given a graph $G=(V,E)$ with vertex integrity $k$, there exists an algorithm that computes $\I_v(G)$ in FPT-time parameterised by $k$.
\end{theorem}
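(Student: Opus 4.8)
The plan is to exploit the separator $U$ of size $k' \le k$ guaranteed by the vertex integrity, whose existence we can compute in FPT time by~\cite{DDH16}. After removing $U$, the graph falls apart into connected components each of order at most $k - k'$; in particular, every vertex lying outside $U$ has degree at most $k - 1$ in $G$ (its neighbours are confined to its own component and to $U$), and this bound survives the deletion of any vertex-irregulator. This boundedness is the crucial feature I will rely on.

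First I would guess the intersection $S_U := S \cap U$ of an optimal vertex-irregulator $S$ with the separator; there are at most $2^k$ choices. Write $R = U \setminus S_U$ for the part of $U$ that survives. I then classify the components of $G - U$ into \emph{types}: two components share a type if there is an isomorphism between them preserving, vertex by vertex, the exact neighbourhood inside $U$. Since each component has at most $k$ vertices and each such vertex carries one of at most $2^{k'}$ possible $U$-neighbourhoods, the number of types is bounded by a function of $k$; for each type $t$ we precompute in polynomial time the number $n_t$ of components of that type. For a fixed type, a \emph{deletion pattern} $p$ is a choice of which vertices of the component to place into $S$; there are at most $2^k$ patterns per type. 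The point is that once a pattern is fixed, the surviving degree $\delta_w(p)$ of every surviving component vertex $w$ is a \emph{constant} (its internal surviving degree plus its fixed number of neighbours in $R$), and is at most $k-1$.

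The coupling I must resolve is that the final degree $D_v$ of a surviving separator vertex $v \in R$ equals $d_R(v)$, its fixed number of neighbours in $R$, plus a sum over all components of the number of surviving neighbours of $v$ that they contribute; this quantity can be large, and it is exactly what the pattern choices control. The saving grace is that a local-irregularity constraint between $v$ and a component neighbour $w$ reads $D_v \ne \delta_w(p) \le k-1$, so it can only ever be violated when $D_v$ is small. I would therefore additionally guess the \emph{capped degree} of each $v \in R$, i.e.\ its exact value when it is at most $k-1$ and the flag ``large'' otherwise; this costs at most $(k+1)^k$ guesses. Finally, to handle disequalities between adjacent vertices of $R$ whose degrees are both ``large'', I guess for each adjacent pair of $R$ which endpoint has the strictly larger degree, at a cost of at most $2^{\binom{k}{2}}$ guesses.

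With all guesses fixed, the remaining optimisation is an integer linear program in a bounded number of variables, solvable in FPT time by Lenstra's algorithm~\cite{Len83}. The variables are the counts $y_{t,p}$ of type-$t$ components assigned pattern $p$ (boundedly many) together with the degrees $D_v$ for $v \in R$. The constraints are $\sum_p y_{t,p} = n_t$ for each type; the defining equation $D_v = d_R(v) + \sum_{t,p} y_{t,p}\,a_{t,p,v}$, where $a_{t,p,v}$ is the constant number of surviving neighbours of $v$ in a type-$t$ component under pattern $p$; the consistency of each $D_v$ with its guessed capped value; and the strict inequalities among the ``large'' degrees dictated by the guessed comparisons. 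Every pattern that is internally not locally irregular, or that creates a surviving component vertex whose constant degree equals the guessed capped degree of an adjacent $R$-vertex, is forbidden a priori by fixing its $y_{t,p}=0$. Since vertices in distinct components are non-adjacent, all remaining local-irregularity requirements are then automatically met. Minimising $|S_U| + \sum_{t,p} y_{t,p}\cdot \mathrm{del}(p)$, where $\mathrm{del}(p)$ counts the deletions in $p$, and taking the best value over all guesses yields $\I_v(G)$: any vertex-irregulator restricts to a feasible point of the ILP of the matching guess with objective equal to its size, and conversely the constraints force every feasible point to correspond to a genuine vertex-irregulator. The heart of the argument, and the main obstacle, is precisely the decoupling described above: the potentially unbounded degrees of separator vertices interact with the bounded component degrees only through a bounded window, so that a bounded amount of guessing linearises every local-irregularity constraint into the ILP.
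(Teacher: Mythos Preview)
Your proposal is correct and follows essentially the same architecture as the paper: guess $S\cap U$, bucket the components of $G-U$ into boundedly many isomorphism types relative to $U$, introduce one integer variable per (type, deletion-pattern) pair, and solve an ILP in $f(k)$ variables via Lenstra.

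The one genuine difference is how the two arguments dispose of the non-linear $\neq$ constraints. The paper additionally guesses the set $S_1$ of (type, sub-type) pairs that actually occur with positive multiplicity and then writes the disequalities $D_u\neq D_v$ (for $u,v\in U'$) and $D_u\neq d^{i,q}_v$ (for $u\in U'$ adjacent to a surviving component vertex) directly into the ILP, relying implicitly on the fact that there are only $f(k)$ such constraints so that one can branch on their sign. Your route is more explicit and arguably cleaner: you first observe that every surviving component vertex has degree at most $k-1$, so separator--component disequalities can only bite when $D_v\le k-1$; guessing the capped profile of the $D_v$'s then lets you kill those constraints by pruning offending patterns up front, and guessing the order on the ``large'' $D_v$'s linearises the separator--separator constraints. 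Both devices cost $2^{\mathrm{poly}(k)}$ extra branching and lead to the same ILP core; neither buys an asymptotic advantage over the other.
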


\begin{proof}
Let $U\subseteq V$ be such that $|U|=k'\leq k$ and $C_1,\ldots, C_m$ be the vertex sets of the connected components of $G[V\setminus U]$ such that $|C_j|\leq k-k'$, $j\in[m]$. Assume that we know the intersection of an optimal vertex-irregulator $S$ of $G$ and the set $U$, and let $S' = S \cap U$ and $U' = U \setminus S$ (there are at most $2^{|U|} \le 2^k$ possible intersections $S'$ of $U$ and $S$). 
Notice that the graph $G[V\setminus S']$ has an optimal vertex-irregulator that contains only vertices from $\bigcup_{i \in [m]}C_i$. Indeed, assuming otherwise contradicts that $S'$ is the intersection of an optimal vertex-irregulator and $U$. Thus, in order to find an optimal vertex-irregulator $S$ of $G$, it suffices to compute $S^* \subseteq \bigcup_{i \in [m]}C_i$, which is an optimal vertex-irregulator of $G[V \setminus S']$, for every set $S' \subseteq U$. Then, we return the set $S^*\cup S'$ of minimum order. We compute $S^*$ through an ILP with bounded number of variables. To do so, we define types and sub-types of graphs $G[U'\cup C_j]$, $j\in[m]$.


Informally, the main idea is to categorise the graphs $G[U' \cup C_j]$, $j \in [m]$, into types based on their structure (formally defined later), whose number is bounded by some function of $k$. Each type $i$ is associated with a number $no_i$ that represents the number of the subgraphs $G[U' \cup C_j]$, $j \in [m]$, that belong in that type. 
Then, for each type $i$, we will define sub-types based on the induced subgraphs $G[(U' \cup C_j) \setminus S_q]$, for $S_q \subseteq C_j$. We also define a variable $no_{i,q}$ that is the number of the subgraphs $G[U' \cup C_j]$, $j \in [m]$, that are of type $i$ and of sub-type $q$ in $G[V\setminus S]$. 
Note that knowing the structure of these types and sub-types, together with $no_{i,q}$, is enough to compute the order of $S^*$. Finally, for any $j \in [m]$, the graph $G[U' \cup C_j]$ is of order at most $k$. Thus, the number of types, sub-types and their corresponding variables, is bounded by a function of $k$. We will present an ILP formulation whose objective is to minimise the order of $S^*$.


We begin by defining the types. Recall first that a function $f:A\rightarrow B$ is a \emph{bijection} if, for every $a_1,a_2\in A$ with $a_1\neq a_2$, we have that $f(a_1)\neq f(a_2)$ and for every $b\in B$, there exists an $a\in A$ such that $f(a)=b$. Recall also that the \emph{inverse} function of $f$, denoted as $f^{-1}$, exists if and only if $f$ is a bijection, and is such that $f^{-1}:B\rightarrow A$ and for each $b\in B$ we have that $f^{-1}(b)=a$, where $f(a)=b$. Two graphs $G[U' \cup C_i]$ and $G[U' \cup C_j]$, $i,j \in [m]$, are of the same type if there exists a bijection 
$f: C_i\cup U' \rightarrow C_j\cup U'$ such that $f(u)=u$ for all $u\in U'$ and $N_{G[U' \cup C_i]}(u) = \{ f^{-1}(v) \mid v \in N_{G[U' \cup C_j]}(f(u))\}$ for all $u \in C_i$. Note that if such a function exists, then $G[U' \cup C_i]$ is isomorphic to $G[U' \cup C_j]$.

Let $p$ be the number of different types. Notice that $p$ is bounded by a function of $k$ as any graph $G[U' \cup C_i]$ has order at most $k$. Also, we can decide if two graphs $G[U' \cup C_i]$ and $G[U' \cup C_j]$, $i,j \in [m]$, are of the same type in FPT-time parameterised by $k$. For each type $i \in [p]$, set $no_i$ to be the number of graphs $G[U' \cup C_j]$, $j \in [m]$, of type $i$. 
Furthermore, for each type $i \in [p]$ we select a $C_j$, $j \in [m]$, such that $G[U' \cup C_j]$ is of type $i$, to represent that type; we will denote this set of vertices by $R_i$. 

We are now ready to define the sub-types. 
Let $i \in [p]$ be a type represented by $R_i$ and $S^i_1,\ldots , S^i_{2^{|R_i|}}$
be an enumeration of the subsets of $R_i$. 
For any $q \in [2^{|R_i|}]$, we define a sub-type $(i,q)$ which represents the induced subgraph $G[(U' \cup R_i) \setminus S_q^i ]$. Let $no_{i,q}$ be the variable corresponding to the number of graphs represented by $G[U'\cup R_i]$, $i\in[p]$, that is of type $(i,q)$ in $G[V\setminus S^*]$, for a vertex-irregulator $S^*$ such that $S^* \cap R_i = S^i_q$. 

Notice that, given a vertex-irregulator $S^* \subseteq \bigcup_{j \in [m]} C_{j}$ of $G[V \setminus S']$, there exists a sub-type $(i,q)$, $i\in [p]$, $q\in [2^{|R_i|}]$, for each  $j\in [m]$, such that the graph $G[(U' \cup C_j)\setminus S^*]$ is of sub-type $(i,q)$. Also, assuming that we know the order of $|S^i_q|$ and the number $no_{i,q}$ for all $i\in [p]$, $q\in [2^{|R_i|}]$, then $|S^*| = \sum_{i \in [p]} \sum_{q \in [2^{|R_i|}]} no_{i,q} |S^i_q|$. 

Before giving the ILP formulation whose goal is to find a vertex-irregulator $S^*$ while minimising the above sum, we guess the $(i,q)$ such that $no_{i,q}\neq 0$. 
Let $S_2$ be the set of pairs $(i,q)$, $i \in [p]$ and $q \in [2^{|R_i|}]$, such that there are two vertices $u,v \in R_i \setminus S^i_q $ where $uv\in E(G[(U'\cup C_i)\setminus S^i_q ])$ and $d_{G[(U'\cup R_i)\setminus S^i_q ]}(u) =d_{G[(U'\cup R_i)\setminus S^i_q ]}(v)$. For every $(i,q)\in S_2$, we have that $no_{i,q}=0$. Indeed, assuming otherwise contradicts the fact that $S^*$ is a vertex-irregulator. 
We guess $S_1 \subseteq \{ (i,q) \mid  i \in [p], q \in 2^{|R_i|} \} \setminus S_2 $ such that $no_{i,q} \neq 0 $ for all $(i,q) \in S_1$. Observe that the number of different sets that are candidates for $S_1$ is bounded by some function of $k$.

\bigskip 

\hrule
\smallskip

\noindent Constants

\begin{tabular}{lll}
 $no_i$ & $ i \in [p]$ & number of components of type $i$ \\
& & \\
$e_{uv}\in \{0,1\}$ & $u,v\in U'$ & set to $1$ iff $uv \in E(G[U'])$ \\
& & \\
$e^{i,q}_{u,v}\in \{0,1\}$ & $i\in [p], q\in [2^{|R_i|}]$, $u \in U' $ & set to $1$ iff $uv \in E(G[(U' \cup R_i) \setminus S^i_{q}])$ \\
& and  $v\in R_i \setminus S^i_{q}$ & \\
& & \\
$b^{i,q}_{u}\in [n]$ & $i\in [p], q\in [2^{|R_i|}]$ and $u \in U'$ & set to $d_{G[(\{u\} \cup R_i) \setminus S^i_{q}]}(u)$ \\
& & \\
$d^{i,q}_{u}\in [n]$ & $i\in [p], q\in [2^{|R_i|}]$ and $u \in R_i \setminus S^i_q$ & set to $d_{G[(U' \cup R_i) \setminus S^i_{q}]}(u)$\\
& & 

\end{tabular}

\noindent Variables

\begin{tabular}{lll}
$no_{i,q}$ & $i \in [p], q\in [2^{|R_i|}]$ & number of graphs of types $(i,q)$ 
\end{tabular}

\bigskip 

\noindent Objective

\begin{align}
\min {\sum}_{i \in [p]} {\sum}_{q \in [2^{|R_i|} ]} no_{i,q} |S^i_{q}|
\end{align}

\noindent Constraints
\bigskip 
\begin{align}
& no_{i,q}=0 && \text{ iff } (i,q)\notin S_1\\
&\sum_{q \in [2^{|R_i|}]} no_{i,q} = no_{i} && \forall i \in [p]\\
&\sum_{w \in U'} e_{wv} + \sum_{i \in [p]} no_{i,q} b^{i,q}_{v} \neq \sum_{w \in U'} e_{wu} + \sum_{i \in [p]} no_{i,q} b^{i,q}_{u}  && \forall u,v \in U', e_{uv}=1\label{model:vi2} \\
&d^{i,q}_{v} \neq \sum_{w \in U'} e_{wu} + \sum_{i \in [p]} no_{i,q} b^{i,q}_{u}  && \forall e^{i,q}_{u,v} = 1 \text{ and } (i,q) \in S_1\label{model:vi3} 
\end{align}

\hrule

\bigskip 

Assume that we have found the values $no_{i,q}$ for $(i,q)$, $i\in [p]$, $q\in [2^{|R_i|}]$. 
We construct an optimal vertex-irregulator of $G[V\setminus S']$ as follows. 
Start with an empty set $S^*$.
For each $i \in [p]$ take all components $C_j$ of type $i$. 
Partition them into $2^{|R_i|}$ many sets $\mathcal{C}^i_q$ each of which containing exactly $no_{i,q}$ of these components, for each $q \in [2^{|R_i|}]$.
For any component in $\mathcal{C}^i_q$, select all vertices represented by the set $S^i_q$ (as it was defined before) and add them to $S^*$.
The final $S^*$ is an optimal vertex-irregulator for $G[V\setminus S']$. 

Let $S=S'\cup S^*$. We show that $S$ is a vertex-irregulator of $G$.
To do so, it suffices to verify that in the graph $G[V\setminus S]$ there are no two adjacent vertices with the same degree. 
Let $u,v$ be a pair of adjacent vertices in a component represented by $R_i \setminus S$, which is of type $(i,q)$. 
If $d_{G[V\setminus S]}(u) = d_{G[V\setminus S]}(v)$, then $(i,q)\in S_2$. Therefore, $no_{i,q}= 0$ and we do not have such a component in $G[V\setminus S]$. 
Thus, it suffices to focus on adjacent vertices such that at least one of them is in $U'$. 
Notice that, in $G[V\setminus S]$, the degree of vertex $u \in U'$ is equal to $\sum_{w \in U'} e_{wv} + \sum_{i \in [p]} no_{i,q} b^{i,q}_{v}$. In other words, no two adjacent vertices in $U'$ have the same degree due to the constraint~\ref{model:vi2}.
Lastly, the constraint~\ref{model:vi3} guarantees that no vertex in $U'$ is adjacent to a vertex in $C_i \setminus S$ (for some $i\in [p]$) such that both of them have the same degree in $G[V\setminus S]$. Moreover, both $S'$ and $S^*$ are constructed to be minimum such sets. Thus, $S$ is an optimal vertex-irregulator of $G$. Finally, since the number of variables in the model is bounded by a function of $k$, we can obtain $S^*$ in FPT time parameterised by $k$ (by running for example the Lenstra algorithm~\cite{Len83}).
\end{proof}

The previous algorithm can be used to find an optimal vertex-irregulator of a graph $G$ in FPT-time when parameterised by the cluster deletion number of $G$. Note that the cluster deletion number of a graph can be computed in FPT-time parameterised by $k$~\cite{HKMN10}.
\begin{definition}[\cite{HKMN10}]
A graph $G=(V,E)$ has \emph{cluster deletion number} $k$ if there exists a set $S\subseteq V$ such that all the connected components of $G[V\setminus S]$ are cliques, and $S$ is of order at most $k$.
\end{definition}

\begin{theorem}\label{thm:FPT-cluster}
Given a graph $G=(V,E)$ with cluster deletion number $k$, there exists an algorithm that computes $\I_v(G)$ in FPT-time parameterised by $k$.
\end{theorem}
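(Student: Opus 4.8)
The plan is to reduce the problem to the vertex-integrity case already handled by Theorem~\ref{thm:fpt-vi}, by shrinking the (potentially large) clique components of $G$ using Lemma~\ref{lem:reduction-tw-c}. First I would compute, in FPT time parameterised by $k$, a set $S\subseteq V$ with $|S|=k$ such that every connected component of $G[V\setminus S]$ is a clique~\cite{HKMN10}; call these cliques $Q_1,\dots,Q_m$. The obstacle to applying Theorem~\ref{thm:fpt-vi} directly is that the $Q_j$ may be arbitrarily large, so $G$ need not have bounded vertex integrity; the twin-reduction lemma is precisely the tool to remove this obstacle.

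The key observation is that two vertices $u,v$ lying in the same clique $Q_j$ are adjacent twins of $G$ whenever they share the same neighbourhood inside $S$, i.e. $N(u)\cap S=N(v)\cap S$. Indeed, $uv\in E$ since both belong to $Q_j$, and $N(u)\setminus\{v\}=(Q_j\setminus\{u,v\})\cup(N(u)\cap S)$, which equals $N(v)\setminus\{u\}$ exactly when $N(u)\cap S=N(v)\cap S$. Since each vertex of $Q_j$ realises one of at most $2^{k}$ possible neighbourhoods inside $S$, the clique $Q_j$ splits into at most $2^{k}$ classes of pairwise adjacent twins. Crucially, this twin relation is unaffected by deletions made elsewhere: for a fixed pair $u,v$ in the same class, both $N(u)\cap S$ and the common set $Q_j\setminus\{u,v\}$ change identically for $u$ and $v$, so $u$ and $v$ remain adjacent twins after any such deletion.

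I would then repeatedly apply Lemma~\ref{lem:reduction-tw-c}: as long as some twin class in some clique contains at least two vertices, pick two of them (they are adjacent twins by the observation) and delete one, recording the total number $d$ of deleted vertices. By Lemma~\ref{lem:reduction-tw-c} each such deletion decreases the optimum by exactly one, so the resulting induced subgraph $G'$ satisfies $\I_v(G)=\I_v(G')+d$. In $G'$ every clique keeps at most one representative per class, hence has order at most $2^{k}$, while $S$ is still a cluster deletion set (deletions of vertices never merge the separate clique components). Taking $S$ as the separator then shows that $G'$ has vertex integrity at most $k+2^{k}$, a function of $k$ alone. Finally I would invoke Theorem~\ref{thm:fpt-vi} on $G'$ to compute $\I_v(G')$ in FPT time and return $\I_v(G')+d$.

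The main thing to get right is the verification that the adjacent-twin property persists throughout the reduction, so that Lemma~\ref{lem:reduction-tw-c} can be chained validly; once this is established, the bound on the vertex integrity of $G'$ and the reduction to Theorem~\ref{thm:fpt-vi} are routine. A secondary point to confirm is that the deletions preserve $S$ as a cluster deletion set and integrity separator of $G'$, which holds because removing vertices from an induced subgraph cannot connect previously distinct components.
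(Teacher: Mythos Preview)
Your proposal is correct and follows essentially the same approach as the paper: partition each clique by neighbourhood-in-$S$ into at most $2^k$ classes of adjacent twins, reduce each class to a single representative via Lemma~\ref{lem:reduction-tw-c}, and then invoke Theorem~\ref{thm:fpt-vi} on the resulting graph of vertex integrity at most $k+2^k$. Your explicit verification that the adjacent-twin relation persists under successive deletions is a point the paper passes over more briefly, but otherwise the arguments coincide.
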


\begin{proof}
    Let $S$ be such that $|S|=k$ and $G[V\setminus S]$ is a disjoint union of cliques $C_1,\dots C_m$ for $m\geq 1$. Our goal is to reduce the size of these cliques so that each one of them has order at most $2^k$. We achieve this through the following procedure. Let $i\in[m]$ be such that the clique $C_i=(V_{C_i},E_{C_i})$ has $|V_{C_i}|>2^k$. Let $V_1,\dots,V_p$ be the partition of $V_{C_i}$ defined by the twin equivalence of $G[V_{C_i}\cup S]$. That is, two vertices $u,v\in V_{C_i}$ belong in a $V_j$, $j\in[p]$, if and only if $u$ and $v$ are twins. Note that $p\leq 2^k$. Observe that, since $C_i$ is a clique, the graphs $C_i[V_j]$, $j\in[p]$, are also cliques. In other words, for each $j\in[p]$, all the vertices of $V_j$ are adjacent twins. We delete all but one vertex of $V_j$, for each $j\in[p]$, and repeat this process for every $i\in[m]$ such that $|V_{C_i}|>2^k$.
    Let $G'=(V',E')$ be the resulting subgraph of $G$ and $d=|D|$, where $D$ is the set of vertices that were removed throughout this process. It follows from Lemma~\ref{lem:reduction-tw-c} that $\I_v(G)=\I_v(G')+d$. Observe also that $S\subseteq V'$ and that each connected component of $G'[V'\setminus S]$ is a clique of at most $2^k$ vertices. In other words, $G'$ has vertex integrity at most $2^k+k$. To sum up, to compute $\I_v(G)$ it suffices to compute $\I_v(G')$, which can be done in FPT-time by running the algorithm presented in Theorem~\ref{thm:fpt-vi}.
\end{proof}

\section{Edge-irregulators}\label{sec:hardness}

In this section we begin the study of finding an optimal edge-irregulator of a given graph. It turns out that the decision version of this problem is \np-complete, even for quite restrictive classes of graphs (see Theorem~\ref{thm:hardness}). Furthermore, it is also $\W[1]$-hard parameterised by the size of the solution. 

\begin{theorem}\label{thm:hardness}
Let $G$ be a graph and $k\in \mathbb{N}$. Deciding if $\I_e(G)\leq k$ is \np-complete, even if $G$ is a planar bipartite graph of maximum degree $6$. 
\end{theorem}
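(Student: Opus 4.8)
The plan is to establish membership in NP first (trivially: given a candidate edge set $S$ with $|S|\leq k$, we delete the edges, recompute all degrees, and check local irregularity in polynomial time), and then prove NP-hardness by a polynomial reduction from a known NP-complete problem. Since the target class is planar, bipartite, and of maximum degree $4$, the natural source problems are planar versions of \textsc{3-SAT}, \textsc{Vertex Cover}, or an independent-set-type problem on planar cubic/subcubic graphs; Observation~\ref{obs:neigbourhood} strongly suggests the reduction should be engineered so that ``conflicting'' adjacent equal-degree pairs correspond to constraints that must be resolved by choosing edges to delete. I would reduce from a restricted variant such as \textsc{Planar (3,4)-SAT} or, more likely given the degree and bipartiteness constraints, from a planar vertex-cover/independent-set problem on subcubic graphs, since those come with built-in planarity and low-degree guarantees that are easy to preserve.

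The core of the construction is to build, from the input instance $I$, a planar bipartite graph $G_I$ of maximum degree $4$ together with a budget $k$, so that $G_I$ has an edge-irregulator of size at most $k$ if and only if $I$ is a yes-instance. The key design principle, forced by Observation~\ref{obs:neigbourhood}, is that every pair of adjacent vertices with equal degree must have at least one incident edge deleted. I would therefore create gadgets in which selecting an edge to delete encodes a Boolean/selection decision, and where each clause (or edge, or constraint) of $I$ is represented by a local configuration of adjacent equal-degree vertices that can only be ``fixed'' within budget when the encoded assignment satisfies the constraint. Deleting an edge changes the degree of \emph{both} its endpoints by one, so the gadgets must be padded with pendant-like structures (chains or attached low-degree trees, kept within the degree-$4$ bound) that let me tune degrees precisely so that the intended deletions break all local equalities while any cheaper deletion pattern leaves some adjacent equal-degree pair intact. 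Maintaining bipartiteness means all padding must respect the two-colouring, and maintaining planarity means the gadgets and their interconnections must admit a planar embedding, which is where a planar source instance pays off.

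I expect the main obstacle to be the interaction between the three simultaneous restrictions — planarity, bipartiteness, and maximum degree exactly $4$ — against the fact that deleting one edge perturbs two degrees at once and can have ``non-local'' ripple effects on the degree sequence, thereby creating or destroying equal-degree conflicts elsewhere. Controlling this global degree bookkeeping while keeping every gadget within degree $4$ is delicate: the padding gadgets that fix degrees are exactly the parts most likely to violate the degree bound or planarity when many of them meet at a shared vertex. The argument will hinge on a careful completeness/soundness proof: for completeness, given a satisfying assignment (or a small vertex cover) of $I$, I would exhibit an explicit edge set of the claimed size and verify directly that $G_I$ minus these edges is locally irregular; for soundness, I would argue that any edge-irregulator of size at most $k$ must, by the budget constraint together with Observation~\ref{obs:neigbourhood}, act ``canonically'' on each gadget (deleting one of a prescribed small set of edges), and that decoding these canonical choices yields a valid solution to $I$. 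Establishing that no ``cheating'' global deletion pattern can beat the canonical one while staying within budget is the crux, and it typically requires a rigidity lemma showing each gadget independently forces at least its intended number of deletions.
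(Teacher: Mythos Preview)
Your high-level strategy is sound and aligns with the paper's approach: reduce from a planar NP-hard problem, build local gadgets in which adjacent equal-degree pairs force edge deletions (via Observation~\ref{obs:neigbourhood}), and argue a tight budget forces canonical behaviour on each gadget. The paper indeed reduces from \textsc{Planar 3-SAT} (with each variable occurring twice positively, once negatively), not from vertex cover or independent set as you lean toward; the SAT incidence graph is modified by replacing each variable's $v_iv'_i$ edge with a fixed variable gadget and attaching a $5$-vertex star to each clause vertex, with budget $k=3n$.

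However, what you have written is a plan, not a proof, and the genuine gap is precisely the part you defer: the concrete gadget. Everything you say about ``padding with pendant-like structures'', ``tuning degrees'', and ``rigidity lemmas'' is correct in the abstract, but the entire difficulty of this theorem lives in exhibiting a specific planar bipartite degree-$\le 4$ gadget that (i) forces at least three edge deletions, (ii) admits exactly two ``cheap'' deletion patterns of cost three that encode \emph{true} versus \emph{false}, and (iii) guarantees these two patterns are mutually exclusive on the edges going to clause vertices. The paper's gadget (its Figure~\ref{fig:np-c}) is an ad hoc tree-like structure on roughly twenty vertices, and the soundness argument requires two case-analysis claims: one showing every edge-irregulator meets each variable gadget in at least three edges, and a second showing that within budget the deletions touching the positive-literal edges and the negative-literal edges cannot both occur. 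Neither claim is conceptually deep, but both are delicate because, as you note, deleting one edge changes two degrees and can cascade; the paper handles this by exhaustive enumeration of which edge near the ``core'' vertices $u_5,u_6,u_7,v_i$ is deleted first. Without an explicit gadget and these two claims, the argument is not complete. Your identification of the obstacles is accurate, but you still need to produce the object and carry out the case analysis.
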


\begin{proof}
The problem is clearly in \np. We focus on showing it is also \np-hard. This is achieved through a reduction from the \textsc{Planar 3-SAT} problem which is known to be \np-complete~\cite{L82}. In that problem, a 3CNF formula $\phi$ is given as an input. We say that a bipartite graph $G'=(V,C,E)$ \emph{corresponds} to $\phi$ if it is constructed from $\phi$ in the following way: for each literal $x_i$ (resp. $\lnot x_i$) that appears in $\phi$, add the \emph{literal vertex} $v_i$ (resp. $v'_i$) in $V$ (for $1\leq i\leq n$) and for each clause $C_j$ of $\phi$ add a \emph{clause vertex} $c_j$ in $C$ (for $1\leq j\leq m$). Then the edge $v_ic_j$ (resp. $v'_ic_j$) is added if the literal $x_i$ (resp. $\lnot x_i$) appears in the clause $C_j$. Finally, we add the edge $v_iv'_i$ for every $i$.  A 3CNF formula $\phi$ is valid as input to the \textsc{Planar 3-SAT} problem if the graph $G'$ that corresponds to $\phi$ is planar. Furthermore, we may assume that each variable appears in $\phi$ twice as a positive and once as a negative literal~\cite{dahlhaus1994complexity_cuts}. The question is whether there exists a truth assignment to the variables of $X$ satisfying $\phi$. Starting from a 3CNF formula $\phi$, we construct a graph $G$ such that $\I_e(G)\leq 3n$ if and only if $\phi$ is satisfiable.

\begin{figure}[!t]
\centering
\begin{tikzpicture}[scale=0.6, inner sep=0.5mm]

\begin{scope}[xshift=-4cm]

\node[draw, circle, line width=0.5pt, fill=white](u1) at (0,1.5)[] {};
\node[draw, circle, line width=0.5pt, fill=white](u2) at (0,0.5)[] {};
\node[draw, circle, line width=0.5pt, fill=white](u3) at (1,1.5)[] {};
\node[draw, circle, line width=0.5pt, fill=white](u4) at (1,0.5)[label=above: {\footnotesize $u_1$}] {};
\node[draw, circle, line width=0.5pt, fill=white](u5) at (1,-1)[label=above: {\footnotesize $u_3$}] {};
\node[draw, circle, line width=0.5pt, fill=white](u6) at (2,1)[label=above: {\footnotesize $u_2$}] {};
\node[draw, circle, line width=0.5pt, fill=white](u7) at (2,-1)[label=above: {\footnotesize $u_4$}] {};
\node[draw, circle, line width=0.5pt, fill=white](u8) at (3,0)[label=above: {\footnotesize $u_5$}] {};
\node[draw, circle, line width=0.5pt, fill=white](u9) at (5,-1)[label=above: {\footnotesize $v'_i$}] {};
\node[draw, circle, line width=0.5pt, fill=white](u11) at (5,1)[label=above: {\footnotesize $v_i$}] {};

\node[draw, circle, line width=0.5pt, fill=white](u10) at (2,-2)[] {};
\node[draw, circle, line width=0.5pt, fill=white](u12) at (7,0)[label=above: {\footnotesize $u_6$}] {};
\node[draw, circle, line width=0.5pt, fill=white](u13) at (8,-1)[label=below: {\footnotesize $u_8$}] {};
\node[draw, circle, line width=0.5pt, fill=white](u14) at (8,1)[label=above: {\footnotesize $u_7$}] {};
\node[draw, circle, line width=0.5pt, fill=white](u15) at (9,-1)[label=below:{\footnotesize $u_{12}$}] {};
\node[draw, circle, line width=0.5pt, fill=white](u16) at (9,0)[label=below: {\footnotesize $u_{11}$}] {};
\node[draw, circle, line width=0.5pt, fill=white](u17) at (9,1)[label=above: {\footnotesize $u_{10}$}] {};
\node[draw, circle, line width=0.5pt, fill=white](u18) at (9,2)[label=above: {\footnotesize $u_9$}] {};
\node[draw, circle, line width=0.5pt, fill=white](u19) at (10,-0.25)[] {};
\node[draw, circle, line width=0.5pt, fill=white](u20) at (10,0.25)[] {};
\node[draw, circle, line width=0.5pt, fill=white](u21) at (10,0.75)[] {};
\node[draw, circle, line width=0.5pt, fill=white](u22) at (10,1.25)[] {};
\node[draw, circle, line width=0.5pt, fill=white](u23) at (10,1.75)[] {};
\node[draw, circle, line width=0.5pt, fill=white](u24) at (10,2.25)[] {};

\draw[-, line width=0.5pt]  (u1) -- (u3);
\draw[-, line width=0.5pt]  (u2) -- (u4);
\draw[-, line width=0.5pt]  (u3) -- (u6);
\draw[-, line width=0.5pt]  (u4) -- (u6);
\draw[-, line width=0.5pt]  (u5) -- (u7);
\draw[-, line width=0.5pt]  (u6) -- (u8);
\draw[-, line width=0.5pt]  (u7) -- (u8);
\draw[-, line width=0.5pt]  (u8) -- (u11);
\draw[-, line width=0.5pt]  (u8) -- (u9);

\draw[-, line width=0.5pt]  (u9) --node [fill=white] {\footnotesize $e_i^4$} (u10);
\draw[-, line width=0.5pt]  (u9) -- (u12);
\draw[-, line width=0.5pt]  (u11) -- (u12);
\draw[-, line width=0.5pt]  (u12) -- (u13);
\draw[-, line width=0.5pt]  (u12) -- (u14);
\draw[-, line width=0.5pt]  (u13) -- (u15);
\draw[-, line width=0.5pt]  (u14) -- (u16);
\draw[-, line width=0.5pt]  (u14) -- (u17);
\draw[-, line width=0.5pt]  (u14) -- (u18);
\draw[-, line width=0.5pt]  (u16) -- (u19);
\draw[-, line width=0.5pt]  (u16) -- (u20);
\draw[-, line width=0.5pt]  (u17) -- (u21);
\draw[-, line width=0.5pt]  (u17) -- (u22);
\draw[-, line width=0.5pt]  (u18) -- (u23);
\draw[-, line width=0.5pt]  (u18) -- (u24);

\draw (u11) [dash pattern=on 2pt off 2pt]  -- node [fill=white] {\footnotesize $e_i^2$} (8,2.5);
\draw (u11) [dash pattern=on 2pt off 2pt]  -- node [fill=white] {\footnotesize $e_i^1$} (8,4);
\draw (u9)[dash pattern=on 2pt off 2pt]  -- node [fill=white] {\footnotesize $e_i^3$} (8,-2);

\node at (5, -3) {\small(a): The variable gadget};
\end{scope}

\begin{scope}[xshift=10cm]

\node[draw, circle, line width=0.5pt, fill=white](c) at (0,0)[label=above right: {\footnotesize $c_j$}] {};
\node[draw, circle, line width=0.5pt, fill=white](v1) at (0,1)[] {};
\node[draw, circle, line width=0.5pt, fill=white](v2) at (0,-1)[] {};
\node[draw, circle, line width=0.5pt, fill=white](u1) at (1,0)[] {};
\node[draw, circle, line width=0.5pt, fill=white](v3) at (1,1)[] {};
\node[draw, circle, line width=0.5pt, fill=white](v4) at (1,-1)[] {};
\node[draw, circle, line width=0.5pt, fill=white](u2) at (2,1)[] {};
\node[draw, circle, line width=0.5pt, fill=white](u3) at (2,0)[] {};
\node[draw, circle, line width=0.5pt, fill=white](u4) at (2,-1)[] {};

\draw[-, line width=0.5pt]  (c) -- (u1);
\draw[-, line width=0.5pt]  (u1) -- (u2);
\draw[-, line width=0.5pt]  (u1) -- (u3);
\draw[-, line width=0.5pt]  (u1) -- (u4);
\draw[-, line width=0.5pt]  (c) -- (v1);
\draw[-, line width=0.5pt]  (c) -- (v2);
\draw[-, line width=0.5pt]  (u1) -- (v3);
\draw[-, line width=0.5pt]  (u1) -- (v4);
\draw [dash pattern=on 2pt off 2pt] (-1,1)-- (c);
\draw [dash pattern=on 2pt off 2pt] (-1,0)-- (c);
\draw [dash pattern=on 2pt off 2pt] (-1,-1)-- (c);

\node at (0.5, -2) {\small(b): The clause gadget};

\end{scope}
\end{tikzpicture}
\caption{The construction in the proof of Theorem~\ref{thm:hardness}. The dashed lines are used to represent the edges between the literal and the clause vertices.}\label{fig:np-c}
\end{figure}

\medskip 

\noindent\textbf{Construction.} We start with the graph $G'$ that corresponds to the formula $\phi$. Then, for each $1\leq i\leq n$, we remove the edge $v_iv'_i$, and attach the gadget illustrated in Figure~\ref{fig:np-c}(a) to $v_i$ and $v'_i$ (the variable gadget). Let $E_i$ denote the  edges of the gadget attached to $v_i$ and $v'_i$ plus the edges $e_i^1,e_i^2$ and $e_i^3$. The edges $e_i^1$ and $e_i^2$ connect $v_i$ to the two clause vertices that correspond to clauses containing $x_i$. The edge $e_i^3$ connects $v_i'$ to the one clause vertex that corresponds to the clause containing $\lnot x_i$. Finally, for each $1\leq j\leq m$, we attach two leaves to $c_j$ and then we add the star with $7$ vertices and identify one of its leaves as the vertex $c_j$ (the clause gadget, illustrated in Figure~\ref{fig:np-c}(b). Observe that the resulting graph $G$ is planar, bipartite and $\Delta(G)=6$. 

\medskip
\noindent\textbf{Properties.} Before we provide the reduction, let us show two claims that are going to be useful.

\begin{claimOC}\label{cl:each-gadget-3}
Let $S$ be an edge-irregulator of $G$. For every $1\leq i\leq n$, we have that $|S\cap E_i|\geq 3$.
\end{claimOC}

\begin{proofclaim}
Observe that $d_G(u_5)=d_G(v_i)=d_G(u_6)=d_G(u_7)=d_G(v'_i)$. It follows that $S$ contains at least one edge $a_1$ incident to $u_6$ or $u_7$ and one edge $a_2$ incident to $v'_i$ or $u_5$. We distinguish cases:
\begin{enumerate}
    \item $a_1=a_2=v'_iu_6$. Then $S$ also contains an edge $a_3$ incident to $v_i$ or $u_5$. If $a_3=u_5v'_i$, then $S$ contains an additional edge incident to $u_2$ or $u_5$. If $a_3=u_5u_4$, then $S$ also contains the edge $u_3u_4$. If $a_3=u_2u_5$, then $S$ contains at least one additional edge incident to $u_2$ or $u_1$. If $a_3=u_5v_i$, then $S$ contains one additional edge incident to $u_2$ or $u_5$. In any one of the above cases, we have that $|S\cap E_i|\geq 3$. Thus, we may assume that $a_3$ is incident to $v_i$ but not to $u_5$. If $a_3=e_i^1$ or $a_3=e_i^2$, then $S$ contains an additional edge incident to $v_i$ or $u_6$. Finally, if $a_3=v_iu_6$, then $S$ contains an additional edge incident to $u_6$ or $u_8$. Thus, if $a_1=a_2=v'_iu_6$, then $|S\cap E_i|\geq 3$.
    \item $a_1\neq a_2$. We distinguish some additional cases:
    \begin{enumerate}
        \item $a_1$ is incident to $u_6$ and $a_1=v_iu_6$. 
            \begin{enumerate}
                \item If $a_2\in\{e_i^3,e_i^4\}$, then $S$ contains an additional edge incident to $v'_i$ or $u_6$.  
                \item If $a_2= v'_iu_6$, then $S$ contains an additional edge incident to $u_6$ or $u_8$.
                \item If $a_2$ is incident to $u_5$, then $S$ contains an additional edge incident to $u_2$ or $u_5$.
            \end{enumerate}
        \item $a_1$ is incident to $u_6$ and $a_1=u_6u_7$. Then $S$ contains at least an additional edge incident to, w.l.o.g., $u_9$.
        \item $a_i$ is incident to $u_6$ and $a_1=u_6u_8$. Then $S$ also contains the edge $u_8u_{12}$.
        \item $a_i$ is incident to $u_7$. Then $S$ contains an additional edge incident to $u_7$.
    \end{enumerate}
    Thus, if $a_1\neq a_2$, then $|S\cap E_i|\geq 3$, which finishes the proof of the claim.
\end{enumerate}
\end{proofclaim}

\begin{claimOC}\label{cl:no-bad-deletes}
Let $S$ be an edge-irregulator of $G$ such that $|S|\leq 3n$. Then, for every $1\leq i\leq n$, we have that
\begin{itemize}
    \item if $|S\cap \{e_i^1,e_i^2\}|\geq 1$ then $|S\cap \{e_i^3,e_i^4\}|=0$ and
    \item if $|S\cap \{e_i^3,e_i^4\}|\geq 1$ then $|S\cap \{e_i^1,e_i^2\}|=0$.
\end{itemize}
\end{claimOC}

\begin{proofclaim}
Since the proofs of the two items are highly symmetrical, we will 
only prove the first item. To do that, it suffices to show that if $S$ does not respect the statement for some $1\leq i\leq n$, then $|S\cap E_i|\geq 4$. Then, since $|S|\leq 3n$, and $1\leq i\leq n$, there exists a $1\leq j\leq n$ such that $i\neq j$ and $|S\cap E_j|\leq 2$. This contradicts Claim~\ref{cl:each-gadget-3}.

Let $H=G-S$. Assume first that there exists an $i$ such that, say, $e_i^1\in S$ and $e_i^3\in S$.
Observe that $S$ contains at least one edge $e$ incident to $u_6$ or $u_7$, as otherwise we would have that $d_H(u_6)=d_H(u_7)$, contradicting the fact that $S$ is an edge-irregulator of $G$. Thus, if we also have that $e_i^2\in S$ or that $e_i^4\in S$, it follows that $|S\cap E_i|\geq 4$, a contradiction. Thus, we may assume that $S\cap E_i=\{e_i^1,e_i^3,e\}$. If $e\in\{u_7u_9,u_7u_{10},u_7u_{11}\}$, say $e=u_7u_9$, then $d_H(u_7)=d_H(u_{10})$. Also, if $e=u_6u_8$, then $S$ also contains $u_8u_{12}$. Finally, if $e=v_iu_6$ (resp. $e=u_6v'_i)$ then $d_H(u_6)=d_H(v'_i)$ (resp. $d_H(u_6)=d_H(v_i)$). It follows from Observation~\ref{obs:neigbourhood} that in all cases, we have that $|S\cap E_i|\geq 4$, a contradiction.
\end{proofclaim} \\

We are now ready to give the reduction. 

\medskip 

\noindent\textbf{Reduction.} Let $G$ be the graph constructed from the formula $\phi$ as explained above. We show that there exists a satisfying truth assignment of $\phi$ if and only if $\I_e(G)\leq 3n$. 

For the first direction, let $T$ be a satisfying truth assignment of $\phi$. Let $S$ be the set containing the edges $e_i^1,e_i^2,v_i'u_6$ for every $1\leq i\leq n$ such that $T(x_i)=true$ and the edges $e_i^3,e_i^4,v_iu_6$ for each $i$ such that $T(\lnot x_i)=true$. Let $H=G-S$. Clearly, $|S|=3n$. Also, $S$ is an edge-irregulator of $G$. Indeed, the part of the graph $H$ that corresponds to the gadget attached to $v_i$ and $v'_i$ is clearly locally irregular for every $i$. Also, for each $j$, we have that $d_H(c_j)\leq 5$, since $C_j$ is satisfied by at least one literal. Therefore, the part of the graph $H$ that corresponds to the gadget attached to $c_j$ is locally irregular for every $j$. Finally, observe that in $H$, for any $j$, if $c_j$ is adjacent to $v_i$ (resp. $v_i'$), for any $i$, then $d_H(c_j)\geq 4$ while $d_H(v_i)=3$ (resp. $d_H(v'_i)=3)$) in this case. Thus $H$ is locally irregular.

For the reverse direction, assume that $\I_e(G)\leq 3n$ and let $S$ be an edge-irregulator of $G$ such that $|S|=3n$. Recall that due to Claim~\ref{cl:no-bad-deletes}, for each $i\in[n]$, if $S$ contains one edge in $\{e_i^1,e_i^2\}$ then it contains no edge in $\{e_i^3,e_i^4\}$ and \emph{vice versa}. For each $i\in [n]$, we set $T(x_i)=true$ if $S$ contains one edge in $\{e_i^1,e_i^2\}$ and $T( x_i)=false$ otherwise. 
Here notice that for some variables no edge from the sets $\{e_i^1,e_i^2\}$ or $\{e_i^3,e_i^4\}$ has been included in $S$. The assignment $T$ has set these variables to $false$ as they are included in the ``otherwise''.
Furthermore, these variables do not play an important role in the satisfiability of the formula.
We claim that $T$ is indeed a truth assignment that satisfies $\phi$.

To see this, observe first that due to Claim~\ref{cl:no-bad-deletes}, each variable will receive exactly one truth value, i.e., $T$ is indeed a valid truth assignment over $\{x_i\mid i \in [n]\}$. It remains to show that each clause $C_j$, $j \in [m]$, that appear in $\phi$ is satisfied by $T$. Fix a clause $C_j$ and consider the clause gadget that corresponds to $C_j$.
Due to Claim~\ref{cl:each-gadget-3} and the assumption that $\I_e(G)\leq 3n$, we can conclude that none of the edges of this clause gadget belong to $S$ (otherwise $|S|>3n$). 
Additionally, the vertex $c_j$ has a neighbour of the same degree as itself in $G$ (its neighbour of degree $6$ in the clause gadget). Therefore, there exists an edge incident to $c_j$ that belong to $S$.
Also, by the above arguments, this edge is either $v_ic_j$ or $v'_ic_j$ for some $i \in [n]$. Therefore, there exists an $i\in [n]$ such that either $v_ic_j\in S$ or $v'_ic_j\in S$. 
We claim that the literal of the variable $x_i$ that appears in $C_j$ is a true literal under the assignment $T$.
Assume that $v_ic_j\in S$ (resp. $v'_ic_j\in S$). Since $v_ic_j \in \{e^1_i,e^2_i\}\cap S$ (resp. $v'_ic_j \in \{e^3_i\} \cap S$), due to Claim~\ref{cl:no-bad-deletes} and the definition of $T$, we have that $T(x_i)=true$ (resp. $T(x_i)=false$). 
Additionally, we note that $v_ic_j \in E(G)$ (resp. $v'_ic_j \in E(G)$) which means that the literal $x_i$ appears in $C_j$ (resp.$\neg x_i$ appears in $C_j$). Therefore $C_j$ is satisfied by the literal $x_i$ ($\neg x_i$ reps.).
In other words, each clause of $\phi$ is satisfied by $T$. This ends the reduction. 
\end{proof}

\begin{theorem}\label{thn:edges-w-hard-size-solution}
Let $G$ be a graph and $k\in \mathbb{N}$. Deciding if $\I_e(G)\leq k$ is $\W[1]$-hard parameterised by $k$.
\end{theorem}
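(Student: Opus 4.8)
The plan is to give a parameterised reduction from \textsc{Multicolored Clique}, which is $\W[1]$-hard parameterised by the number $k$ of colour classes. Given an instance $(H,k)$ where $V(H)$ is partitioned into classes $V_1,\dots,V_k$ and we seek a clique containing exactly one vertex of each class, I would build a graph $G$ together with a budget $k'=h(k)$ for an explicit function $h$ (roughly $\binom{k}{2}$ plus a constant per class), such that $\I_e(G)\le k'$ if and only if $H$ has a multicoloured clique. Since $k'$ depends only on $k$, this is a valid parameterised reduction for the parameter ``size of the solution''.

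The construction rests on two kinds of gadgets. First, for each colour class $V_i$ I would attach a \emph{selection gadget} containing a fixed, small number of adjacent equal-degree pairs; by Observation~\ref{obs:neigbourhood}, any edge-irregulator must spend at least one deletion at each such pair, and I would design the gadget (pinning degrees exactly with pendant vertices, in the spirit of the gadget of Figure~\ref{fig:np-c}) so that the cheapest way to resolve all its built-in conflicts is to delete one specific bundle of edges whose choice encodes exactly one vertex $v\in V_i$. The pendant padding guarantees that the degree change caused by a correct deletion does not create any new conflict, so an honest selection costs precisely the budget allotted to that class, while any deviation is strictly more expensive.

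Second, for every pair of classes $\{i,j\}$ I would add a \emph{verification gadget} joining the selection gadgets of $V_i$ and $V_j$, with degrees arranged so that the part of $G$ it spans can be made locally irregular within the per-pair budget if and only if the vertices selected in $V_i$ and $V_j$ are adjacent in $H$: a non-edge of $H$ would leave an unresolvable equal-degree conflict that can only be repaired by an extra deletion, pushing the total above $k'$. The per-class and per-pair budgets sum to $k'$, and the tightness of this budget forces every gadget to be resolved in its cheapest (honest) way simultaneously, so a within-budget edge-irregulator corresponds to a consistent vertex choice that is pairwise adjacent, i.e.\ a multicoloured clique; the converse direction reads off the edges to delete from a given clique and checks local irregularity of $G$ minus these edges directly.

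The step I expect to be the main obstacle is precisely the degree bookkeeping. Because deleting an edge lowers the degree of \emph{both} endpoints, a single deletion can ripple into new equal-degree conflicts elsewhere, so the gadgets must be padded with pendant structures that keep all ``spectator'' degrees well separated and make the intended deletions the uniquely optimal ones. Making this padding work coherently across all selection and verification gadgets at once, while keeping the budget a clean function of $k$ and arguing (using tightness) that a cheap edge-irregulator cannot ``share'' or ``waste'' deletions between gadgets, is the delicate heart of the proof.
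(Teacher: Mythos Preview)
Your high-level plan---reduce from \textsc{Multicoloured Clique}, attach a gadget per colour class and per pair of classes, set the budget to a function of $k$ only, and use tightness of the budget to force honest behaviour in every gadget---is exactly the template the paper follows, with budget $k'=(k^2+k)/2$. As written, however, your proposal is a plan rather than a proof: no concrete gadget is given, and you yourself flag the degree bookkeeping as the crux that remains to be done.

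It is worth seeing how the paper sidesteps the ``verification gadget'' difficulty you anticipate, because its execution differs from your sketch in a useful way. Rather than building a separate verification gadget per pair $\{i,j\}$, the paper first \emph{subdivides every edge} of $H$, obtaining a vertex $u_{i,j}^{p,q}$ for each edge $v_i^p v_j^q$, and then attaches the \emph{same} kind of selection gadget to the set $U_{i,j}$ of these subdivision vertices as it attaches to each $V_i$. Each gadget contains a single forced conflict (two adjacent vertices $w,y$ of equal degree $n^2$), and is rigged so that the unique single-edge repair is to delete one edge $wz$ with $z$ in the attached set; any other single deletion either leaves the $w$--$y$ conflict or creates a new one inside the gadget. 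Thus the per-pair gadget \emph{selects an edge} of $H$ rather than checking one. Verification is then handled entirely by degree padding: every $v\in V_i$ is padded to degree $kn$ and every $u\in U_{i,j}$ to degree $kn+1$, so after the deletions the selected subdivision vertex drops to degree $kn$ and conflicts with each of its two original-graph neighbours unless those neighbours are precisely the vertices selected in $V_i$ and $V_j$ (which have dropped to $kn-1$). This implicit verification via a one-unit degree gap is considerably cleaner than designing a bespoke pairwise verification gadget, and it makes the ripple-effect bookkeeping you worry about essentially trivial: all ``spectator'' degrees sit at $1$, $2$, $n^2$, $kn$, or $kn+1$, and the only possible new conflict after an honest deletion is exactly the one encoding the adjacency test.
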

\begin{proof}
The reduction is from \textsc{$k$-Multicoloured Clique}. 

\defproblem{\textsc{$k$-Multicoloured Clique}}{
    A graph $G'=(V,E)$ and a partition $(V_1,\ldots,V_k)$ of $V$ into $k$ independent sets. 
}{
    Does there exist a set $S\subseteq V$ such that $|S|=k$ and $G'[S]$ is a clique?
}

It is known that \textsc{$k$-Multicoloured Clique} is $\W[1]$-hard parameterised by $k$~\cite{FLRSST11}, even when $|V_i|=n$ for all $i \in [k]$. 

On a high level, our reduction will proceed as follows. Starting with the graph $G'$ that is given in the input of \textsc{$k$-Multicoloured Clique} with $|V_i|=n$ for all $i \in [k]$, we will subdivide every edge of the graph $G'$. Then, for each $i\in[k]$, we will attach one copy of a particular gadget to the vertices of $V_i$. Also, for each $1\leq i<j\leq k$, we will attach a copy of our gadget to the vertices that correspond to the edges $v_iv_j$ of $G'$, with $v_i\in V_i$ and $v_j\in V_j$. In total, we will add $(k^2+k)/2$ gadgets. 

The gadgets are structured so that any edge-irregulator of the graph contains at least one edge for each gadget (so any solution has a size of at least $(k^2+k)/2$). Furthermore, we prove that, if we have selected only one edge from a gadget, then that edge must be incident to either a vertex of the original graph or a vertex that represents an edge of the original graph.
Finally, we show that:
\begin{itemize}
    \item an edge-irregulator $S$ that contains exactly one edge from each gadget (\textit{i.e.}, an edge-irregulator of size $(k^2+k)/2$) can give us a clique of size $k$ in the original graph by selecting the vertices and edges (represented by vertices) of the original graph that are incident to the edges of $S$ and 
    \item if we have a clique of size $k$ in the original graph we can construct an optimal edge-irregulator $S$ by selecting the edges of the gadgets that are incident to the $k$ vertices of the clique and the $(k^2-k)/2$ vertices that represent the edges of the clique.
\end{itemize}

We proceed with the formal proof. 

\medskip
\noindent\textbf{Construction.} Assume that we are given an instance $G'=(V,E)$ with vertex partition $(V_1,\ldots,V_k)$ where $|V_i| = n$ for all $i \in [k]$. For each $i \in [k]$, we denote by $v_i^p$, for $p\in [n]$, the vertices of $V_i$.
We construct a graph $G$ as follows:
\begin{itemize}
    \item Start with a copy of $G'$.
    \item Subdivide each edge $e \in E$. Let $u_{i,j}^{p,q}$ be the vertex that corresponds to the edge $v_{i}^{p}v_{j}^{q} \in E$. Also, let $U_{i,j}$ be the set of vertices created by subdividing the edges between the sets $V_i$ and $V_j$, \emph{i.e.}, the set $\{u_{i,j}^{p,q} \mid v_{i}^{p}v_{j}^{q} \in E\}$.
    \item For each pair $(i,j)$ where $1\le i < j \le k$, create a copy of the gadget $H_{|U_{i,j}|}$, illustrated in Figure~\ref{figure:size-of-the-solution-reduction}. We denote the copy of $H_{|U_{i,j}|}$ by $H^{w_{i,j}}$, where $w_{i,j}$ is the copy of $w$ that belongs in the gadget, and the copy of $y$ in $H^{w_{i,j}}$ by $y_{i,j}$. Then, we add all the edges between $w_{i,j}$ and the vertices of $U_{i,j}$. 
    \item For each $i \in [k]$, create a copy of the gadget $H_{|V_{i}|}$.  We denote the copy of $H_{|V_{i}|}$ by $H^{w_{i}}$, where $w_{i}$ is the copy of $w$ that belongs in the gadget, and the copy of $y$ in $H^{w_{i}}$ by $y_{i}$. Then, we add all the edges between $w_{i}$ and the vertices of $V_{i}$.
    \item Finally, add leaves attached to the vertices of $V_i$, $i \in [k]$, so that each vertex of $V_i$ has degree $kn$ and attached to the vertices of $U_{i,j}$, $1\le i<j \le k$, so that each vertex of $U_{i,j}$ has degree $kn + 1$. 
\end{itemize} 

Let $G$ be the resulting graph. We proceed with the reduction.

\begin{figure}[!t]
\centering
\scalebox{0.75}{
\begin{tikzpicture}[inner sep=0.7mm]

	\node[draw, circle, line width=1pt, fill=black](v) at (6,2.5)  [label=left: \large$w$]{};

	\node[draw, circle, line width=1pt, fill=black](u) at (8,2.5)  [label=below: \large$y$]{};
        \draw[-, line width=1pt]  (v) -- (u);

\begin{scope}[xshift=0cm,yshift=0cm]   

\draw (6,0) ellipse (2.5cm and 0.5cm);

    \node[draw, circle, line width=1pt, fill=white](bl1) at (4.25,0) []{};
    \node[draw, circle, line width=1pt, fill=white](blx) at (7.75,0) []{};
    \path (bl1) -- (blx) node [black, font=\Large, midway, sloped] {$\,\dots$};

        \draw[-, line width=1pt]  (v) -- (bl1);
        \draw[-, line width=1pt]  (v) -- (blx);
    
    \node[] () at (6,-0.94) []{\large Set $V_{i}$ for $i \in [k]$ or $U_{i,j}$ for $1\le i < j \le k$};


\end{scope}

\begin{scope}[xshift=0cm,yshift=0cm]  

    \node[draw, circle, line width=1pt, fill=black](br1) at (11,0) []{};
    \node[draw, circle, line width=1pt, fill=black](brd) at (13,0) []{};
    \path (br1) -- (brd) node [black, font=\Large, midway, sloped] {$\,\dots$};

    \draw [decorate,decoration={brace, amplitude=10pt},xshift=0pt,yshift=-6pt] (13.25,0) -- (10.75,0) node [black,midway,yshift=-0.71cm] {\large $n^2-2$ leaves};

    \draw[-, line width=1pt]  (u) -- (br1);
    \draw[-, line width=1pt]  (u) -- (brd);

\end{scope}


\begin{scope}[xshift=0cm,yshift=0.5cm]

    \node[draw, circle, line width=1pt, fill=black](rd1) at (10,4.5) []{};
    \node[draw, circle, line width=1pt, fill=black](rd2) at (10,2) []{};

    \draw[-, line width=1pt]  (u) -- (rd1);
    \draw[-, line width=1pt]  (u) -- (rd2);

    \node[draw, circle, line width=1pt, fill=black](rd21) at (12,5.25) []{};
    \node[draw, circle, line width=1pt, fill=black](rd2d) at (12,3.75) []{};
    \path (rd21) -- (rd2d) node [black, font=\Large, midway, sloped] {$\,\dots$};

    \draw[-, line width=1pt]  (rd21) -- (rd1);
    \draw[-, line width=1pt]  (rd2d) -- (rd1);

    \draw [decorate,decoration={brace, amplitude=10pt},xshift=6pt,yshift=0pt] (12,5.5) -- (12,3.5) node [black,midway,xshift=1.7cm] {\large $n^2-1$ leaves};

    \node[draw, circle, line width=1pt, fill=black](rd31) at (12,2.75) []{};
    \node[draw, circle, line width=1pt, fill=black](rd3d) at (12,1.25) []{};
    \path (rd31) -- (rd3d) node [black, font=\Large, midway, sloped] {$\,\dots$};

    \draw[-, line width=1pt]  (rd31) -- (rd2);
    \draw[-, line width=1pt]  (rd3d) -- (rd2);

    \draw [decorate,decoration={brace, amplitude=10pt},xshift=6pt,yshift=0pt] (12,3) -- (12,1) node [black,midway,xshift=1.7cm] {\large $n^2-1$ leaves};

\end{scope}

\begin{scope}[xshift=0.25cm,yshift=-0.25cm] 

    \node[draw, circle, line width=1pt, fill=black](u16) at (4.5,4) []{};
    \node[draw, circle, line width=1pt, fill=black](u17) at (5.5,4) []{};
    \node[draw, circle, line width=1pt, fill=black](u18) at (7,4) []{};
    \node[draw, circle, line width=1pt, fill=black](u19) at (4.5,5) []{};
    \node[draw, circle, line width=1pt, fill=black](u20) at (5.5,5) []{};
    \node[draw, circle, line width=1pt, fill=black](u21) at (7,5) []{};
    
    \draw[-, line width=1pt]  (v) -- (u16);
    \draw[-, line width=1pt]  (v) -- (u17);
    \draw[-, line width=1pt]  (v) -- (u18);
    \draw[-, line width=1pt]  (u16) -- (u19);
    \draw[-, line width=1pt]  (u17) -- (u20);
    \draw[-, line width=1pt]  (u18) -- (u21);
    
    \draw [decorate,decoration={brace, amplitude=10pt},xshift=0pt,yshift=6pt] (4.25,5) -- (7.25,5) node [black,midway,yshift=0.7cm] {\large $n^2-b$ times};
    
    \path (u17) -- (u18) node [black, font=\Large, midway, sloped] {$\,\dots$};
    
\end{scope}

\end{tikzpicture}
}
\caption{The gadget $H_b$, $b \in \mathbb{N}$, used in the proof of Theorem~\ref{thn:edges-w-hard-size-solution}. The black vertices represent the vertices of the gadget. The white vertices represent either a set of the original vertices $V_i$, $i \in [k]$, or a set of edge vertices $U_{i,j}$, $1 \le i < j \le k$. In the construction, if $w$ is adjacent to vertices of a $V_i$, $i \in [k]$, 
then $b=|V_i|$ while if $w$ is adjacent to vertices of a $U_{i,j}$, $1 \le i < j \le k$, then 
$b=|U_{i,j}|$. In each copy of the gadget, the degrees of $w$ and $y$ are equal.}\label{figure:size-of-the-solution-reduction}

\end{figure}

\medskip

\noindent\textbf{Reduction.} We prove that $G$ has an edge-irregulator of order $(k^2 + k)/2$ if and only if $G'$ is a yes instance of \textsc{$k$-Multicoloured Clique}.

Assume that $G'$ is a yes instance of \textsc{$k$-Multicoloured Clique} and $C= \{c_1, \ldots , c_k\}$ is a clique in $G'$ with $c_i\in V_i$ for every $i\in[k]$. We will construct an edge-irregulator of $G$ as follows. Start with an empty set $S$.
Notice that, for each $i \in [k]$, $|V_i \cap C|=1$ and let $p\in[n]$ be such that $v_i^p=c_i$; 
we add to $S$ the edge $v_i^p w_i$. 
For each pair $(i,j)$, $1\le i<j \le k$, let $p,q\in[n]$ be such that $v_i^p=c_i$ and $v_j^q=c_j$; we add to $S$ the edge $u_{i,j}^{p,q}w_{i,j}$. Notice that the edge $v_i^p v_j^q$ must exist in $E$ since $C$ is a clique. It follows that the vertex $u_{i,j}^{p,q}$, and therefore the edge $u_{i,j}^{p,q}w_{i,j}$, also exists in $G$. By construction, $|S| = (k^2+k)/2$. It only remains to prove that $S$ is an edge-irregulator of $G$.

Consider the graph $G-S$. Observe that, for every $H^{w_i}$, $i \in [k]$, we have reduced the degree of $w_i$ by exactly one. Therefore, any two adjacent vertices of $H^{w_i}$ have different degree (see Figure~\ref{figure:size-of-the-solution-reduction}). The same holds true for every $H^{w_{i,j}}$, $1\le i<j \le k$.
Consider now the edges $xz\in E(G)$ such that $x\in\{w_i,w_j,w_{i,j}\}$, and $z\in V_i \cup U_{i,j}\cup V_j$, $1\le i<j \le k$. Notice that $d_{G-S}(x)=n^2-1$ and $kn-1\leq d_{G-S}(z)\leq kn+1$. For sufficiently large $n$, we have that $n^2-1 > kn+1$.
It remains to consider the edges between vertices in $V_i \cup V_j$ and in $U_{i,j}$ for any $1\le i<j \le k$. 
Notice that, for every $1\le i<j \le k$, all vertices of $V_i \cup V_j$, except one vertex $v_i^p\in V_i$ and one vertex $v_j^q\in V_j$, have degree $kn$, and $d_{G-S}(v_i^p)=d_{G-S}(v_j^q)=kn - 1$.
Also, all vertices of $U_{i,j}$, except one vertex $u'$, have degree $kn +1$, and $d_{G-S}(u')=kn$. So, $u'$ is the only vertex of $U_{i,j}$ that could possibly have the same degree as a vertex in $V_i\setminus \{v_i^p\}$ or $V_j\setminus \{v_j^q\}$. It follows by the construction of $S$ that $u'$ is actually $u_{i,j}^{p,q}$. Also, by the construction of $G$,  $u_{i,j}^{p,q}$ is adjacent only to $v_i^p$ and $v_j^q$, as it represents the edge between their corresponding vertices in $G'$. Thus, for every $1\le i<j \le k$, no vertex in $U_{i,j}$ has the same degree as any of its neighbours in $V_i$ or $V_j$. It follows from all the arguments above that $S$ is indeed an edge-irregulator of $G$.

Now we show that if $\I_e(G)=(k^2+k)/2$ then $G'$ has a clique of size $k$. Let $S$ be an edge-irregulator of $G$ of order $(k^2+k)/2$.
First, we notice that for each $i\in [k]$, $d_G(w_i)=d_G(y_i)$ and that for each $1\le i <j \le k$, $d_G(w_{i,j})=d_G(y_{i,j})$. 
Let $E_{w_i}$ be the set of edges $w_iv$ for $v \in V_i$ and $E_{w_{i,j}}$ be the set of edges $w_{i,j}u$ for $u \in U_{i,j}$. Also, let $w \in \{w_i \mid i \in [k]\} \cup \{w_{i,j} \mid 1\le i < j \le k \}$. 
Since $S$ is an edge-irregulator of $G$, it follows that $|S\cap (E(H^w)\cup E_w)|\geq 1$. Also, observe that for any pair of distinct vertices $w, w' \in \{w_i \mid i \in [k]\} \cup \{w_{i,j} \mid 1\le i < j \le k \}$, we have that $(E(H^w)\cup E_w) \cap ( E(H^{w'})\cup E_{w'} ) = \emptyset$. Since $|S|=(k^2+k)/2$, we obtain that, actually, $|S\cap (E(H^w)\cup E_w)|= 1$. Next, we show that $S$ includes only edges from the set $E_w$, for each $w \in \{w_i \mid i \in [k]\} \cup \{w_{i,j} \mid 1\le i < j \le k \}$. In particular we claim the following:

\begin{claimOC}\label{cl:exactly-one}
    Let $w \in \{w_i \mid i \in [k]\} \cup \{w_{i,j} \mid 1\le i < j \le k \}$. 
    It holds that  $S \cap E(H^w) = \emptyset$ and that $|S \cap E_{w}| = 1$. 
\end{claimOC}

\begin{proofclaim}
    Assume that $S\cap E(H^w)\neq \emptyset$ and let $e \in S \cap E(H^w)$. 
    We distinguish cases according to which edge of $H^w$ is $e$. In each case, we show that $S$ must include an additional edge of $E(H^w)$, which is a contradiction to the fact that $|S \cap ( E(H^w) \cup E_{w} ) |= 1$.
    
    \smallskip
    \noindent \textbf{$\boldsymbol{e}$ is incident to neither $\boldsymbol{w}$ nor $\boldsymbol{y}$:} Then $S$ must also include an additional edge incident to $w$ or $y$ (from previous discussion).

    \smallskip
    \noindent\textbf{$\boldsymbol{e}$ is incident to $\boldsymbol{y}$:} Then, $S$ must include an additional edge of $E(H^w)$, as otherwise $d_{G-S}(y)=n^2$ and $y$ would have at least one neighbour of degree $n^2$.

    \smallskip
    \noindent
    \textbf{$\boldsymbol{e}$ is incident to $\boldsymbol{w}$ and $\boldsymbol{e \neq wy}$:} Then, $S$ must include an additional edge of $E(H)$, as otherwise $G-S$ would include a connected component isomorphic to $K_2$. 
\end{proofclaim}\\
The previous claim also shows that $S \subseteq \bigcup_{i \in [k]}E_{w_i} \cup \bigcup_{1\le i < j \le k} E_{w_{i,j}}$.

We now explain how to construct a clique of $G'$ of order $k$.
Let $\ell(i)=m(i)$ be the index that specifies which edge incident to $w_{i}$ is included in $S$. That is, $\ell(i)$ is such that $w_{i} v_i^{\ell(i)} \in S$. 
Similarly, for each $1\le i<j \le k$, let $\ell(i,j)$ and $m(i,j)$ 
be the indices such that $w_{i,j} u_{i,j}^{\ell(i,j) , m(i,j)} \in S$. 
Notice that both $\ell(i)$ and $\ell(i,j)$ are unique as $S$ contains exactly one edge incident to each of $w_i$ and $w_{i,j}$ (by Claim~\ref{cl:exactly-one}).
\begin{claimOC}
    The set $C =\{ v_{i}^{\ell(i)} \mid i \in[k] \}$ induces a clique of order $k$ in $G$.
\end{claimOC}

\begin{proofclaim}
    First, for any $1\le i<j \le k$, we show that $\ell(i) = \ell(i,j)$ and $m(j) = m(i,j)$. To simplify the notation let $\ell = \ell(i,j)$ and $m = m(i,j)$. By the definition of $\ell$ and $m$ we have that $w_{i,j} u_{i,j}^{\ell , m} \in S$.
    Now, we consider the degrees of the vertices $v_{i}^{\ell}$ 
    and $u_{i,j}^{\ell,m}$. 
    Since $w_{i,j} u_{i,j}^{\ell, m } \in S$, we have that $d_{G-S}(u_{i,j}^{\ell,m})=kn$. 
    If $\ell(i) \neq \ell$, 
    then $d_{G-S}(v_{i}^{\ell})=kn $, as $S$ would not include any edges incident to $v_i^{\ell}$ in that case. This is a contradiction since $v_{i}^{\ell} $ and $u_{i,j}^{\ell, m}$ are adjacent in $G$ (by construction) and remain so in $G-S$ (as $S \subseteq \bigcup_{i \in [k]}E_{w_i} \cup \bigcup_{1\le i < j \le k} E_{w_{i,j}}$). Therefore, for any $1\le i<j \le k$, $\ell(i) = \ell = \ell(i,j)$. Similarly, we can show that for any $1\le i<j \le k$, $m(j) = m = m(i,j)$.
    Now we show that for every pair of distinct vertices $u,v \in \{ v_{i}^{\ell(i)} \mid i \in[k] \}$, we have that $u$ and $v$ are adjacent in $G'$.
    W.l.o.g. let $ u =v_{i}^{\ell(i)} $ and $v = v_{j}^{\ell(j)} $ for some $1\le i < j \le k$. We know that $\ell(i) = \ell $ and $\ell (j)  = m(j) = m$. Therefore, the vertex $u_{i,j}^{\ell(i,j) , m(i,j)} = u_{i,j}^{\ell , m}$ of $G$ is adjacent to $v_{i}^{\ell (i)}$ and $v_{j}^{\ell (j)}$. This means that $v_{i}^{\ell (i)}$ and $v_{j}^{\ell (j)}$ are incident in $G'$ as the vertex $u_{i,j}^{\ell(i) , m(j)}$ corresponds to the edge between these two vertices in $G'$ (recall the construction of $G$). 
    Thus, any pair of vertices in $C$ is a pair of adjacent vertices in $G'$. It follows that $C$ is a clique.
\end{proofclaim}\\

This completes the proof.
\end{proof}

Additionally, this problem exhibits a similar behaviour to finding optimal vertex-irregulators, as it also remains intractable even for ``relatively large'' structural parameters.

\begin{theorem}\label{thm:edge-irr-w-hard}
Let $G$ and $k\in \mathbb{N}$. Deciding if $\I_e(G)\leq k$ is $\W[1]$-hard parameterised by either the feedback vertex set number or the treedepth of $G$. 
\end{theorem}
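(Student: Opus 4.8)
The plan is to give a parameter-preserving reduction analogous to the one used for the vertex-irregulator $\W[1]$-hardness (i.e.\ from \textsc{List Colouring} via Observation~\ref{obs:bounded-td-fvs}), producing a graph whose feedback vertex set number and treedepth are both bounded by a function of the \textsc{List Colouring} parameter. Recall from Observation~\ref{obs:bounded-td-fvs} that if we begin with a graph $G$ of bounded vertex cover number and attach copies of bounded-depth rooted trees directly onto its vertices, then the resulting graph simultaneously has bounded treedepth and bounded feedback vertex set number; so both parameter bounds follow from a single construction. Hence the first step is to start from a \textsc{List Colouring} instance $(G,L)$ with $G$ of small vertex cover, as in~\cite{FLRSST11}.

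The core of the reduction is to design, for each vertex $v$ of $G$ and each admissible colour in its list $L(v)$, a tree-like \emph{choice gadget} attached to $v$, together with \emph{adjacency-checking gadgets} along the edges of $G$, so that deleting a cheapest edge-irregulator corresponds exactly to choosing, for each $v$, one colour from $L(v)$, and an edge-irregulator of the target budget exists if and only if the chosen colours form a proper list colouring. The mechanism is degree-based: the number of edges one must delete incident to a given vertex encodes its chosen colour (via the resulting degree), and the local-irregularity constraint on the edges of the checking gadgets forces adjacent vertices to receive different colours. This mirrors Theorem~\ref{thn:edges-w-hard-size-solution}, where gadgets force exactly one edge per gadget into the solution and the surviving degrees encode a combinatorial choice; here I would reuse that philosophy but tune the gadget degrees so that a valid solution of size exactly the number of vertices (times a fixed constant) forces one colour-choice per vertex and the edge constraints enforce properness.

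The key structural point to verify is that \emph{all} the attached gadgets are trees (or become forests after removing the bounded vertex cover of $G$), so that Observation~\ref{obs:bounded-td-fvs} applies verbatim and yields simultaneously bounded treedepth and bounded feedback vertex set number. Concretely I would argue: the bounded vertex cover of $G$ together with the (bounded-size) set of ``hub'' vertices of the gadgets forms a feedback vertex set of bounded size, and after their removal what remains is a disjoint union of bounded-depth trees, giving bounded treedepth. The correctness argument then splits into the two standard directions, using Observation~\ref{obs:neigbourhood} to lower-bound the cost of any edge-irregulator on each gadget (each gadget is built around a pair of equal-degree adjacent vertices, forcing at least one deletion), exactly as in Claims~\ref{cl:each-gadget-3} and~\ref{cl:exactly-one}.

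The main obstacle will be simultaneously (i) keeping every gadget a bounded-depth tree so that both parameters stay bounded, and (ii) making the degree arithmetic work out so that a single uniform deletion budget enforces a clean one-colour-per-vertex choice \emph{and} the edge-properness condition, all while the graph stays bipartite-free of unwanted equal-degree coincidences. In the edge-deletion setting one cannot delete a vertex to ``switch off'' a whole neighbourhood at once, so encoding a colour must be done purely through controlled degree reductions; the delicate part is ensuring that no cheap alternative edge-irregulator exists that deletes edges inside a gadget rather than at the intended colour-selecting edges. I expect to handle this with an analogue of Claim~\ref{cl:no-bad-deletes}, showing that any ``misplaced'' deletion inside a gadget forces a strictly larger total cost, so that any budget-respecting solution is forced into the canonical colour-selection form.
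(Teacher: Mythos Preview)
Your proposal is a plan, not a proof, and it diverges from the paper in its choice of source problem. The paper does \emph{not} reduce from \textsc{List Colouring}; it reduces from \textsc{General Factor}, which is $\W[1]$-hard parameterised by the vertex cover number of the input~\cite{GKSSY12}. This choice is the whole point: \textsc{General Factor} already asks for an edge set $S\subseteq E(H)$ such that every vertex attains a degree from a prescribed list, so the reduction only has to \emph{forbid} the complementary degrees. For each vertex $u_i$ of $H$ and each forbidden value $a_j\in\overline{L}(u_i)$, the paper attaches a depth-$3$ tree whose root $u_{i,j}$ is adjacent to $u_i$ and has degree equal to the degree $u_i$ would have in $G-S$ precisely when $d_{H-S}(u_i)=a_j$; thus local irregularity on the edge $u_iu_{i,j}$ is exactly the General Factor constraint. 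Each tree is padded with $n^2$ copies of every relevant star, so that touching any tree with even one deletion forces at least $n^2$ deletions there (Claim~\ref{claim:no-edges-trees}); with budget $n^2\ge |E(H)|$ this confines an optimal edge-irregulator to $E(H)$. Since only bounded-depth trees are attached to vertices of $H$, Observation~\ref{obs:bounded-td-fvs} yields both parameter bounds at once.

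Your \textsc{List Colouring} route has a genuine gap: the gadgets are never built, and the obstacles you flag are real and unresolved. First, you speak of ``adjacency-checking gadgets along the edges of $G$'', but the number of edges of $G$ is not bounded by its vertex cover number, so any per-edge gadget contributing even one hub vertex or one cycle outside the original vertex cover destroys the feedback-vertex-set bound; your assertion that the hub vertices form a bounded-size set is not justified. Second, if the colour of $v$ is encoded by \emph{how many} incident edges are deleted, different colours cost different amounts and the total budget depends on the unknown colouring, so no single target $k$ can be fixed; and if every colour is to cost the same, you have not explained how colours are distinguished at all. These are precisely the difficulties that the \textsc{General Factor} source problem sidesteps: there is no colour to encode and no per-edge properness check to implement, only degree targets, which translate directly into local-irregularity constraints between $u_i$ and the roots $u_{i,j}$. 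As written, the missing gadget design is the entire technical content of the theorem.
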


\begin{proof}
The reduction is from the \textsc{General Factor} problem:

\defproblem{\textsc{General Factor}}{
    A graph $H=(V,E)$ and a list function $L: V \rightarrow 2^{\Delta(H)}$ that specifies the available degrees for each vertex $u \in V$. 
}{
    Does there exist a set $S\subseteq E$ such that $d_{H-S}(u) \in L(u)$ for all $u \in V$?
}

This problem is known to be $\W[1]$-hard when parameterised by the vertex cover number of $H$~\cite{GKSSY12}. 

Starting from an instance $(H,L)$ of \textsc{General Factor}, we construct a graph $G$ such that $\I_e(G)< n^2$, where $n=|V(H)|$, if and only if $(H,L)$ is a yes-instance. Moreover, the constructed graph $G$ will have treedepth and feedback vertex set $\mathcal{O}(vc)$, where $vc$ is the vertex cover number of $H$.
For every vertex $u\in V(H)$, let us denote by $\overline{L}(u)$ the set $\{0,1,\ldots, d_H(u)\} \setminus L(u)$. In the case where $\{0,1,\ldots, d_H(u)\} \setminus L(u)=\emptyset$, we set $\overline{L}(u)=\{-1\}$. On a high level, the graph $G$ is constructed by adding some trees on the vertices of $H$. In particular, for each vertex $u\in V(H)$ and for each element $a$ in $\overline{L}(u)$, we will attach a tree to $u$ whose purpose is to prevent $u$ from having degree $a$ in $G-S$, for any optimal edge-irregulator $S$ of $G$. 
We proceed with the formal proof. 

\medskip
\noindent\textbf{Construction.} We begin by defining an arbitrary order on the vertices of $H$. That is, $V(H)=\{u_1,u_2,\dots$ $,u_n\}$. Next, we describe the trees we will use in the construction of $G$. In particular, we will describe the trees that we attach to the vertex $u_i$, for every $1\leq i\leq n$. First, for each $a_j\in \overline{L}(u_i)$, define the value $a'_j=d_H(u_i)-a_j$. Also, for each $j$, let $d_{i,j}=2in^4-a'_j$. 
For each ``forbidden degree'' $a_j$ in the list $\overline{L}(u_i)$, we will attach a tree $T_{i,j}$ to $u_i$. We define the tree $T_{i,j}$ as follows.

\begin{figure}[!t]
\centering
\scalebox{0.7}{
\begin{tikzpicture}[inner sep=0.7mm]

	\node[draw, circle, line width=1pt, fill=white](v1) at (0,0)  []{};
    \node[draw, circle, line width=1pt, fill=white](v2) at (0.5,0)  []{};
    \node[draw, circle, line width=1pt, fill=white](v3) at (1.5,0)  []{};
    \node[draw, circle, line width=1pt, fill=white](v4) at (2.5,0)  []{};
    \node[draw, circle, line width=1pt, fill=white](v5) at (3,0)  []{};
    \node[draw, circle, line width=1pt, fill=white](v6) at (4,0)  []{};
    \node[draw, circle, line width=1pt, fill=white](v7) at (5,0)  []{};
    \node[draw, circle, line width=1pt, fill=white](v8) at (5.5,0)  []{};
    \node[draw, circle, line width=1pt, fill=white](v9) at (6.5,0)  []{};
    \node[draw, circle, line width=1pt, fill=white](v10) at (7.5,0)  []{};
    \node[draw, circle, line width=1pt, fill=white](v11) at (8,0)  []{};
    \node[draw, circle, line width=1pt, fill=white](v12) at (9,0)  []{};
    \node[draw, circle, line width=1pt, fill=white](v13) at (12,0)  []{};
    \node[draw, circle, line width=1pt, fill=white](v14) at (12.5,0)  []{};
    \node[draw, circle, line width=1pt, fill=white](v15) at (13.5,0)  []{};
    \node[draw, circle, line width=1pt, fill=white](v16) at (14.5,0)  []{};
    \node[draw, circle, line width=1pt, fill=white](v17) at (15,0)  []{};
    \node[draw, circle, line width=1pt, fill=white](v18) at (16,0)  []{};
    \node[draw, circle, line width=1pt, fill=white](v19) at (1,1)  []{};
    \node[draw, circle, line width=1pt, fill=white](v20) at (3.5,1)  []{};
    \node[draw, circle, line width=1pt, fill=white](v21) at (5.75,1)  []{};
    \node[draw, circle, line width=1pt, fill=white](v22) at (8.25,1)  []{};
    \node[draw, circle, line width=1pt, fill=white](v23) at (12.75,1)  []{};
    \node[draw, circle, line width=1pt, fill=white](v24) at (15.25,1)  []{};
    \node[draw, circle, line width=1pt, fill=white](v25) at (8,3)  [label=above left: \large$u_{i,j}$]{};
    \node[draw, circle, line width=1pt, fill=white](v26) at (9.5,4)  [label=right: \large$u_i$]{};

    \draw[-, line width=1pt]  (v1) -- (v19);
    \draw[-, line width=1pt]  (v2) -- (v19);
    \draw[-, line width=1pt]  (v3) -- (v19);
    \draw[-, line width=1pt]  (v4) -- (v20);
    \draw[-, line width=1pt]  (v5) -- (v20);
    \draw[-, line width=1pt]  (v6) -- (v20);
    \draw[-, line width=1pt]  (v7) -- (v21);
    \draw[-, line width=1pt]  (v8) -- (v21);
    \draw[-, line width=1pt]  (v9) -- (v21);
    \draw[-, line width=1pt]  (v10) -- (v22);
    \draw[-, line width=1pt]  (v11) -- (v22);
    \draw[-, line width=1pt]  (v12) -- (v22);
    \draw[-, line width=1pt]  (v13) -- (v23);
    \draw[-, line width=1pt]  (v14) -- (v23);
    \draw[-, line width=1pt]  (v15) -- (v23);
    \draw[-, line width=1pt]  (v16) -- (v24);
    \draw[-, line width=1pt]  (v17) -- (v24);
    \draw[-, line width=1pt]  (v18) -- (v24);
    \draw[-, line width=1pt]  (v19) -- (v25);
    \draw[-, line width=1pt]  (v20) -- (v25);
    \draw[-, line width=1pt]  (v21) -- (v25);
    \draw[-, line width=1pt]  (v22) -- (v25);
    \draw[-, line width=1pt]  (v23) -- (v25);
    \draw[-, line width=1pt]  (v24) -- (v25);
    \draw[-, line width=1pt]  (v25) -- (v26);

    \path (v2) -- (v3) node [black, font=\Large, midway, sloped] {$\,\dots$};
    \path (v5) -- (v6) node [black, font=\Large, midway, sloped] {$\,\dots$};
    
    \path (v8) -- (v9) node [black, font=\Large, midway, sloped] {$\,\dots$};
    \path (v11) -- (v12) node [black, font=\Large, midway, sloped] {$\,\dots$};
    \path (v14) -- (v15) node [black, font=\Large, midway, sloped] {$\,\dots$};
    \path (v17) -- (v18) node [black, font=\Large, midway, sloped] {$\,\dots$};
    \path (v19) -- (v20) node [black, font=\Large, midway, sloped] {$\,\dots$};
    \path (v21) -- (v22) node [black, font=\Large, midway, sloped] {$\,\dots$};
    \path (v23) -- (v24) node [black, font=\Large, midway, sloped] {$\,\dots$};

    \path (9,0.5) -- (12,0.5) node [black, font=\Large, midway, sloped] {$\,\dots$};

    \draw [decorate,decoration={brace, amplitude=10pt},xshift=0pt,yshift=-1pt] (1.7,-0.1) -- (-0.2,-0.1) node [black,midway,yshift=-0.7cm] {};
    \node[] () at (0.75,-0.65) []{\large $d_{i,j}-2$};
    \draw [decorate,decoration={brace, amplitude=10pt},xshift=0pt,yshift=-1pt] (4.2,-0.1) -- (2.3,-0.1) node [black,midway,yshift=-0.7cm] {};
    \node[] () at (3.25,-0.65) []{\large $d_{i,j}-2$};
    
    \draw [decorate,decoration={brace, amplitude=10pt},xshift=0pt,yshift=-1pt] (4.2,-0.9) -- (-0.2,-0.9) node [black,midway,yshift=-0.7cm] {};
    \node[] () at (2,-1.5) []{\large $n^2$ stars};

    \draw [decorate,decoration={brace, amplitude=10pt},xshift=0pt,yshift=-1pt] (6.7,-0.1) -- (4.8,-0.1) node [black,midway,yshift=-0.7cm] {};
    \node[] () at (5.75,-0.65) []{\large $d_{i,j}-3$};
    \draw [decorate,decoration={brace, amplitude=10pt},xshift=0pt,yshift=-1pt] (9.2,-0.1) -- (7.3,-0.1) node [black,midway,yshift=-0.7cm] {};
    \node[] () at (8.25,-0.65) []{\large $d_{i,j}-3$};

    \draw [decorate,decoration={brace, amplitude=10pt},xshift=0pt,yshift=-1pt] (9.2,-0.9) -- (4.8,-0.9) node [black,midway,yshift=-0.7cm] {};
    \node[] () at (7,-1.5) []{\large $n^2$ stars};

    \draw [decorate,decoration={brace, amplitude=10pt},xshift=0pt,yshift=-1pt] (13.7,-0.1) -- (11.8,-0.1) node [black,midway,yshift=-0.7cm] {};
    \node[] () at (12.75,-0.65) []{\large $d_{i,j}-n^2$};
    \draw [decorate,decoration={brace, amplitude=10pt},xshift=0pt,yshift=-1pt] (16.2,-0.1) -- (14.3,-0.1) node [black,midway,yshift=-0.7cm] {};
    \node[] () at (15.25,-0.65) []{\large $d_{i,j}-n^2$};

    \draw [decorate,decoration={brace, amplitude=10pt},xshift=0pt,yshift=-1pt] (16.2,-0.9) -- (11.8,-0.9) node [black,midway,yshift=-0.7cm] {};
    \node[] () at (14,-1.5) []{\large $n^2+q$ stars};

    \draw [decorate,decoration={brace, amplitude=10pt},xshift=0pt,yshift=-1pt] (16.2,-1.8) -- (-0.2,-1.8) node [black,midway,yshift=-0.7cm] {};
    \node[] () at (8,-2.5) []{\large $2in^4-a'_j-1$ stars};

\end{tikzpicture}
}
\caption{The tree $T_{i,j}$ that is attached to the vertex $u_i$, where $j$ is such that $a_j\in \overline{L}(u_i)$, in the proof of Theorem~\ref{thm:edge-irr-w-hard}. The value of $q$ is such that after attaching $T_{i,j}$ to $u_i$ (and thus including the edge $u_{i,j}u_i$) we have $d(u_{i,j})=2in^4-a'_j$.}\label{fig:W-hard-fvs-edge}
\label{fig:Tij}
\end{figure}

First, for every $1\leq k\leq n^2-1$,
create $n^2$ 
copies of $S_{d_{i,j}-k}$ (the star on $d_{i,j}-k$ vertices) and $q$ additional copies of $S_{d_{i,j}-n^2+1}$
(the exact value of $q$ will be defined in what follows). Then, choose one leaf from each one of the above stars, and identify them into a single vertex denoted as $u_{i,j}$; the value of $q$ is such that $d(u_{i,j})=d_{i,j}=2in^4-a'_j$. 
Let $T_{i,j}$ be the resulting tree and let us say that $u_{i,j}$ is the root of $T_{i,j}$ (see Figure~\ref{fig:Tij}).  

Let us now describe the construction of $G$. For each vertex $u_i\in V(H)$ and for each $a_j\in \overline{L}(u_i)$, add the tree $T_{i,j}$ to $H$ and the edge $u_{i,j}u_i$. Then, for each vertex $u_i\in V(H)$, for any $j$ such that $u_{i,j}$ is a neighbour of $u_i$, add $p_i$ additional copies of the tree $T_{i,j}$, as well as the edges between $u_i$ and the roots of the additional trees, so that $d_G(u_i)=2in^4$. 
The resulting graph is $G$. Note that, for each vertex of $V(H)$, we are adding at most $\mathcal{O}(n)$ trees, each one containing at most $\mathcal{O}(n^{10})$ vertices. Thus, the construction of $G$ is achieved in polynomial time. 

\medskip 

\noindent\textbf{Reduction.} Assume first that $(H,L)$ is a yes-instance of \textsc{General Factor}, and let $S\subseteq E$ be such that $d_{H-S}(u)\in L(u)$ for all $u\in V(H)$. We claim that $S$ is also an edge-irregulator of $G$. By the construction of $G$, and since $S$ only contains edges from $H$, there are no two adjacent vertices in $G[V(G) \setminus V(H)]$ that have the same degree in $G-S$. Thus, it remains to check the pairs of adjacent vertices $x,y$ such that, either both $x$ and $y$ belong to $V(H)$, or, w.l.o.g., $x\in V(H)$ and $y\in V(G-H)$. For the first case, let $x=u_i$ and $y=u_{i'}$, for $1\leq i<i'\leq n$. Then, assuming that $d_{G-S}(u_i)=d_{G-S}(u_{i'})$, we get that $2in^4-p=2i'n^4-p'$, where $S$ contains $0\leq p< n^2$ and $0\leq p'< n^2$ edges incident to $u_i$ and $u_{i'}$ respectively. Thus, $2n^4(i-i')=p-p'$, a contradiction since $|p-p'| < n^2$ and $|2n^4(i-i')|\ge n^4$ (as $|i- i'|\ge 1$). For the second case, for every $i$, let $d_{G-S}(u_i)=2in^4-p$, where the set $S$ contains $0\leq p\leq n^2-1$ edges of $H$ incident to $u_i$. Also, by the construction of $G$ and since $S$ only contains edges from $H$, we have that for every $j$, $d_{G-S}(u_{i,j})=d_G(u_{i,j})=2in^4-a'_j$, where, recall, $a'_j=d_H(u_i)-a_j$ for $a_j\in \overline{L}(u_i)$ (see Figure~\ref{fig:W-hard-fvs-edge}). Assume now that there exist $i,j$ such that $d_{G-S}(u_i)=d_{G-S}(u_{i,j})$. Then, $2in^4-p=2in^4-d_H(u_i)+a_j$ and thus $d_H(u_i)-p=a_j$. But then $d_{H-S}(u_i)=a_j$, which is a contradiction since $a_j\in \overline{L}(u_i)$. Thus, $S$ is an edge-irregulator of $G$ and $|S|< n^2$ since $S$ only contains edges of $E(H)$.

For the reverse direction, assume that $\I_e(G)< n^2$ and let $S$ be an optimal edge-irregulator of $G$. We will show that $S$ is also such that $d_{H-S}(u_i)\in L(u_i)$, for every $i$. Let us first prove the following claim.

\begin{claimOC}\label{claim:no-edges-trees}
Let $S$ be an optimal edge-irregulator of $G$. Then either $S\subseteq E(H)$ or $|S|\ge n^2$.

\end{claimOC}
\begin{proofclaim}
    Assume there exist $i,j$ such that $|S\cap E_{i,j}|=x\geq 1$ and $x < n^2$, where $E_{i,j}$ is the set containing all the edges of the tree $T_{i,j}$ and the edge $u_iu_{i,j}$. Among those edges, there are $x_1\geq 0$ edges incident to $u_{i,j}$ and $x_2\geq 0$ edges incident to children of $u_{i,j}$ (but not to $u$), with $x_1+x_2=x< n^2$.
    
    Assume first that $x_1=0$. Then $x=x_2$ and there is no edge of $S\cap E_{i,j}$ that is incident to $u_{i,j}$. Then $d_{G-S}(u_{i,j})=d_G(u_{i,j})$ and observe that $d_{G}(u_{i,j})$ is strictly larger than that of any of its children (by the construction of $G$). It follows that $S\setminus E_{i,j}$ is also an edge-irregulator of $G$, contradicting the optimality of $S$. Thus $n^2>x_1\geq 1$. It then follows from the construction of $G$ that there exist at least $n^2$ children of $u_{i,j}$, denoted by $z_1,\dots,z_{n^2}$, such that $d_{G-S}(u_{i,j})=d_G(z_k)$, for every $1\leq k\leq n^2$. Since  $x < n^2$, there exists at least one $1\leq k\leq n^2$ such that $d_{G-S}(u_{i,j})=d_{G-S}(z_k)$, contradicting the fact that $S$ is an edge-irregulator. Thus either $x=0$ or $x\ge n^2$.
    \end{proofclaim}

It follows directly from Claim~\ref{claim:no-edges-trees} that $S$ contains only edges of $E(H)$. Assume that there exist $i,j$ such that $d_{H-S}(u_i)=a_j$ and $a_j\in \overline{L}(u_i)$. Then $d_{G-S}(u_i)=2in^4-a'_j$. Also, by the construction of $G$ and since $S\subseteq E(H)$, $u_i$ is adjacent to a vertex $u_{i,j}$ in $G-S$ and we have that $d_{G-S}(u_{i,j})=d_{G}(u_{i,j})=2in^4-a'_j$. This is contradicting the fact that $S$ is an edge-irregulator of $G$. Thus, for every $i,j$, we have that if $d_{H-S}(u_i)=a_j$, then $a_j\in L(u_i)$, which finishes our reduction.

Finally, if $H$ has vertex cover number $vc$, then, by Observation~\ref{obs:bounded-td-fvs}, 
we have that $G$ has treedepth and feedback vertex set $\mathcal{O}(vc)$. 
\end{proof}

We close this section by observing that the proof of Theorem~\ref{thm:fpt-vi} can be adapted for the case of edge-irregulators. Indeed, it suffices to replace the guessing of vertices and the variables defined on vertices, by guessing of edges and variables defined on the edges of the given graph. Finally, the definition of the sub-types is done through subgraphs produced only by deletion of edges. This leads us to the following:
\begin{corollary}
Given a graph $G$ with vertex integrity $k$, there exists an algorithm that computes $\I_e(G)$ in FPT-time parameterised by $k$.
\end{corollary}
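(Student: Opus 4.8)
The plan is to adapt the FPT algorithm of Theorem~\ref{thm:fpt-vi} from vertex deletions to edge deletions, reusing the same decomposition by vertex integrity. Given the separator $U$ with $|U|=k'\leq k$ and the components $C_1,\dots,C_m$ of $G[V\setminus U]$, each of order at most $k-k'$, every subgraph $G[U\cup C_j]$ again has order at most $k$. The first step is to replace the outer guessing: instead of guessing which vertices of $U$ belong to the solution, I would guess the set $S_U\subseteq E(G[U])$ of edges \emph{inside} $U$ that the optimal edge-irregulator $S$ deletes. There are at most $2^{\binom{k}{2}}$ such guesses, which is bounded by a function of $k$. The edges incident to $U$ but going into the components will be handled through the type/sub-type machinery below, so the outer guess only needs to fix the edges with both endpoints in $U$.

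The second step is to redefine types and sub-types in terms of edge deletions. Two augmented components $G[U'\cup C_i]$ and $G[U'\cup C_j]$ are of the same type under exactly the same bijection condition as before (this part is unchanged, since it only concerns the graph structure). For the sub-types, instead of enumerating subsets $S^i_q\subseteq C'_i$ of \emph{vertices} to delete, I would enumerate subsets $F^i_q$ of the \emph{edges} of $G[U'\cup C'_i]$ — that is, edges with both endpoints in $C'_i$ together with edges between $U'$ and $C'_i$. A sub-type $(i,q)$ then records the induced-on-edges subgraph $G[U'\cup C'_i]-F^i_q$. Since $G[U'\cup C'_i]$ has at most $\binom{k}{2}$ edges, the number of sub-types is again bounded by a function of $k$. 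The constants $b^{i,q}_u$, $d^{i,q}_v$ and $e^{i,q}_{u,v}$ are recomputed as the degrees and adjacencies in these edge-deleted subgraphs, and the objective becomes $\min\sum_{i,q} no_{i,q}\,|F^i_q|$ plus the fixed cost $|S_U|$ of the guessed internal edges.

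The third step is to set up the ILP. The variables $no_{i,q}$, the partition constraints $\sum_q no_{i,q}=no_i$, and the degree-inequality constraints~\ref{model:vi2} and~\ref{model:vi3} carry over verbatim, except that the degree of a vertex $u\in U'$ in $G-S$ must now also subtract the number of internal edges of $S_U$ incident to $u$; this is a constant determined by the guess, so it is simply folded into the left- and right-hand sides of the two degree constraints. The pruning set $S_2$ of sub-types in which two adjacent component-vertices share a degree is defined exactly as before, forcing $no_{i,q}=0$ for those, and $S_1$ is guessed among the remaining sub-types. As before, the number of variables is bounded by a function of $k$, so Lenstra's algorithm~\cite{Len83} solves each ILP in FPT time, and we iterate over all guesses of $S_U$ and $S_1$.

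The correctness argument mirrors the one for Theorem~\ref{thm:fpt-vi}: an optimal edge-irregulator restricted to each component falls into one of the finitely many sub-types, and conversely any feasible assignment of the $no_{i,q}$ can be realised by partitioning the components of each type and deleting the corresponding edge sets. The main obstacle — and the only genuinely new point — is the bookkeeping for edges between $U'$ and the components. Unlike in the vertex case, deleting such an edge lowers the degree of both a $U'$-vertex and a component vertex simultaneously, so I must make sure these half-edges are accounted for on exactly one side and that the degree of each $u\in U'$ in the constraints correctly aggregates, across all component types, the contribution $\sum_{i} no_{i,q}\,b^{i,q}_u$ where $b^{i,q}_u$ already reflects the edges into the representative component deleted within the sub-type. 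Provided the type bijection fixes $U'$ pointwise (which it does), this aggregation is well-defined, and the rest goes through unchanged.
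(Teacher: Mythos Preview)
Your proposal is correct and follows exactly the approach the paper itself sketches: replace the guessed vertex set $S'\subseteq U$ by a guessed edge set $S_U\subseteq E(G[U])$, redefine sub-types via subsets of edges of $G[U\cup C'_i]$ rather than subsets of vertices, and recompute the degree constants accordingly in the same ILP. The paper's own justification is a two-sentence remark to this effect, and your write-up simply spells out the details (including the handling of $U$--component cross-edges, which is the only place requiring care); there is no substantive difference in method.
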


\section{Conclusion}\label{sec:conclusion}
In this work we continued the study of the problem of finding optimal vertex-irregulators, and introduced the problem of finding optimal edge-irregulators. In the case of vertex-irregulators, our results are somewhat optimal, in the sense that we almost characterise which are the ``smallest'' graph-structural parameters that render this problem tractable. The only ``meaningful'' parameter whose behaviour remains unknown is the modular-width of the input graph. The parameterised behaviour of the case of edge-irregulators is also somewhat understood, but there are still some parameters for which the problem remains open. 
Another interesting direction is that of approximating optimal vertex or edge-irregulators. In particular it would be interesting to identify parameters for which either problem becomes approximable in FPT-time (recall that vertex-irregulators are not approximable within any decent factor in polynomial time~\cite{FMT22}). Finally, provided that the behaviour of edge-irregulators is better understood, we would also like to propose the problem of finding locally irregular minors, of maximum order, of a given graph $G$.

\acknowledgements
\label{sec:ack}
The authors would like to thank Dániel Marx for his important contribution towards rendering the proof of Theorem~\ref{thm:FPT-neighborhood diversity} more elegant.

\nocite{*}
\bibliographystyle{abbrvnat}
\bibliography{bibliography}
\label{sec:biblio}

\end{document}